\newcommand{\mathsym}[1]{{}}
\newcommand{\unicode}[1]{{}}
\numberwithin{equation}{section}
\theoremstyle{plain}
\newtheorem{theorem}{Theorem}
\newtheorem{corollary}[theorem]{Corollary}
\newtheorem{proposition}[theorem]{Proposition}
\numberwithin{theorem}{section}
\theoremstyle{definition}
\theoremstyle{remark}
\newtheorem{remark}[theorem]{Remark}
\begin{document}
\title[Stieltjes--Wigert matrix ensemble]{Global and local scaling limits for the $\beta = 2$ Stieltjes--Wigert random matrix ensemble}
\author{Peter J. Forrester}
\address{School of Mathematics and Statistics, 
ARC Centre of Excellence for Mathematical
 and Statistical Frontiers,
University of Melbourne, Victoria 3010, Australia}
\email{pjforr@unimelb.edu.au}

\begin{abstract}
The eigenvalue probability density function (PDF) for the Gaussian unitary ensemble has a well known analogy with the
Boltzmann factor for a classical log-gas with pair potential $- \log | x - y|$, confined by a one-body
harmonic potential. A generalisation is to replace the pair potential by $- \log |\sinh (\pi (x-y)/L) |$. The resulting PDF first
appeared in the statistical physics literature in relation to non-intersecting Brownian walkers, equally spaced  at time
$t=0$, and
subsequently in the study of quantum many body systems of the Calogero-Sutherland type, and also in Chern-Simons field theory.
It is an example of a determinantal point process with correlation kernel
 based on the Stieltjes--Wigert polynomials. We take up the problem of determining the moments of this ensemble, and find an exact
expression in terms of a particular little $q$-Jacobi polynomial. From their large $N$ form, the global density can
be computed. Previous work has evaluated the edge scaling limit of the correlation kernel in terms of the Ramanujan ($q$-Airy) function.
We show how in a particular $L \to \infty$ scaling limit, this reduces to the Airy kernel.
\end{abstract}
\maketitle

\section{Introduction}\label{s1}
Highly recognisable in mathematical physics is the probability density function (PDF)
\begin{equation}\label{1.1}
p_N^{(G)}(x_1,\dots,x_N) = {1 \over C_N^{(G)}} \prod_{l=1}^N e^{- x_l^2} \prod_{1 \le j < k \le N} (x_k - x_j)^2, \qquad x_l \in \mathbb R \: (l=1,\dots,N)
\end{equation}
with normalisation
\begin{equation}\label{1.2}
 C_N^{(G)} = 2^{-N(N-1)/2} \prod_{j=1}^N j !.
 \end{equation}
 In a somewhat disguised form (\ref{1.1}) first (to this author's knowledge) arose in the 1940 work of Husimi \cite{Hu40}, who was studying
 properties of the ground state wave function $\psi_0(x_1,\dots, x_N)$ for $N$ spin polarised fermions in one-dimension with
 a harmonic confining potential. For this problem, using dimensionless units, Husimi gave $\psi_0$ in the Slater determinant form
 \begin{equation}\label{1.3}
 \psi_0(x_1,\dots,x_N) = {1 \over \sqrt{N!}} \det \Big [ \phi_{j-1}(x_k) \Big ]_{j,k=1}^N,
  \end{equation}
  where, with $H_l(x)$ denoting the Hermite polynomials,
 \begin{equation}\label{1.4}  
  \phi_{l}(x) = {1 \over \sqrt{2^l l! \pi^{1/2}}} H_l(x) e^{- x^2/2}.
  \end{equation}
  To understand this formula, note that the Schr\"odinger equation for $N$ particles on a line with a confining harmonic potential factorises as the
  sum of $N$ single particle Schr\"odinger equations
  \begin{equation}\label{1.5}  
  - {1 \over 2} \Big ( {d^2 \over d x^2 } - x^2 \Big ) \phi_l(x) = \varepsilon_l   \phi_l(x), \qquad \varepsilon_l = l + 1/2 \: \: (l=0,1,\dots).
    \end{equation}
   For (\ref{1.5})  $\{  \phi_{l}(x) \}_{l=0}^{N-1}$ are the normalised wave functions corresponding to
   the lowest allowed energy levels in order; forming the determinant ensures that the many body wave function is anti-symmetric
  as required for fermions.
 
 Factoring the normalisation in (\ref{1.4}) from each row of (\ref{1.3}), and the Gaussian from each column shows 
  \begin{equation}\label{1.6}  
  \psi_0(x_1,\dots,x_N) =  \prod_{l=1}^N { e^{- x_l^2} \over  \sqrt{2^l l! \pi^{1/2}}}  \det \Big [ H_{j-1} (x_k) \Big ]_{j,k=1}^N.
  \end{equation}
  Introducing the monic Hermite polynomials $p_l^{(G)}(x)$ we have $H_l(x) = 2^l p_l^{(G)}(x)$and thus
    \begin{equation}\label{1.8}  
 \det \Big [ H_{j-1} (x_k) \Big ]_{j,k=1}^N = 2^{N (N - 1)/2} \det      \Big [ p_{j-1}^{(G)} (x_k) \Big ]_{j,k=1}^N.
\end{equation}
But for general monic polynomials $\{ p_l(x) \}_{l=0,1,\dots}$
\begin{equation}\label{1.9} 
 \det      \Big [ p_{j-1} (x_k) \Big ]_{j,k=1}^N = \det [ x_k^{j-1} ]_{j,k=1}^N = \prod_{1 \le j < k \le N} (x_k - x_j),
 \end{equation}
 where the first equality follows by successive elementary row operations to eliminate all but the leading monomial
 from each row, and the second equality is the Vandermonde determinant identity (see e.g.~\cite[Ex.~1.9 q.1]{Fo10}).
 Substituting (\ref{1.9}) in (\ref{1.8}), and the result in (\ref{1.6}), we see that
 \begin{equation}\label{1.10} 
 | \psi_0(x_1,\dots, x_N) |^2 = p_N^{(G)}(x_1,\dots, x_N),
 \end{equation}
 or in words, the square of the ground state wave function for spin polarised fermions in one-dimension with
 a harmonic confining potential is given by the PDF (\ref{1.1}).
 
 There are other interpretations of (\ref{1.1}) in theoretical/ mathematical physics. Let $X$ be an $N \times N$ standard
 complex Gaussian matrix, and define the random Hermitian matrix $H$ by $H = {1 \over 2} (X + X^\dagger)$.
 The set of such matrices is said to form the Gaussian unitary ensemble (GUE).
 It is a well known result that the eigenvalue PDF for the GUE is given by (\ref{1.1}); see e.g.~\cite[Prop.~1.3.4]{Fo10}.
 Exponentiating the product over pairs in (\ref{1.1}) shows
  \begin{equation}\label{1.11} 
  p_N^{(G)}(x_1,\dots,x_N) \propto e^{- \beta U(x_1,\dots,x_N)}, \qquad U:= {1 \over 2} \sum_{j=1}^N x_j^2 - \sum_{1 \le j < k \le N} \log | x_k - x_j|, \: \: \beta = 2,
   \end{equation}
   thus revealing an analogy with the Boltzmann factor for a classical gas, in equilibrium at inverse temperature $\beta = 2$, and interacting via a repulsive
   logarithmic pair potential, and an attractive one-body harmonic potential towards the origin. This analogy was used extensively in random
   matrix theory by Dyson \cite{Dy62}.
   
   As a further interpretation, consider $N$ Brownian walkers in one-dimension, and thus individually with a PDF
   $u_t(x^{(0};x)$ obeying the diffusion equation
   \begin{equation}\label{1.13}   
   {1 \over D} {\partial u \over \partial  t} = {1 \over 2}  {\partial^2 u \over \partial  x^2},
    \end{equation}
    starting from the points $\mathbf x^{(0)} = (x_1^{(0)},\dots, x_N^{(0)})$, $(x_1^{(0)} < \cdots < x_N^{(0)})$.
    The Karlin--MacGregor formula \cite{KM59} tells us that the PDF for the event the walkers arrive at
    $\mathbf x = (x_1,\dots, x_N)$ without intersecting is
   \begin{equation}\label{1.14}     
   G_t(\mathbf x^{(0)}; \mathbf x) = \det \Big [ u_t(x_j^{(0)};x_k) \Big ]_{j,k=1}^N.
 \end{equation}   
 Using this, it can be shown \cite{KT02} (see also \cite[Prop.~10.1.12]{Fo10}) that the PDF for the event that
 $N$ non-intersecting Brownian  walkers   all starting at the origin, arrive at position $\mathbf x$ after time $t$, 
 with the non-intersecting condition required for all times $T$ $(T \to \infty)$, is equal to (\ref{1.1}).
 
 Our interest in the present work is in the generalisation of (\ref{1.1}) (set $c=1$ and take $L \to \infty$)
    \begin{equation}\label{1.15}     
    p_N^{(SW_e)}(x_1,\dots,x_N)  = {1 \over C_{N,c}^{(SW_e)} (q)}  \prod_{l=1}^N e ^{- c x_l^2} \prod_{1 \le j < k \le N} \Big (
    \sinh ( \pi (x_k - x_j)/L) \Big )^2,
 \end{equation} 
 where
 \begin{equation}\label{1.15a}   
  C_{N,c}^{(SW_e)} (q) = N! 2^{-N ( N - 1)} \Big ( \sqrt{\pi \over c} \Big )^N q^{N^2 - 1/2} q^{- {1 \over 6} N ( 2N - 1)( 2 N + 1)}
  \prod_{j=1}^{N-1} (1 - q^j)^{N-j},
  \end{equation} 
  with
  \begin{equation}\label{1.16a}     
  q = e^{-1/(2k^2)}, \qquad k^2 = { c L^2 \over (2 \pi)^2}.
 \end{equation}   
 Changing variables 
  \begin{equation}\label{1.16b} 
 u_j = e^{{2 \pi \over L} ( x_j + {\pi \over L c} ) } = q^{-N} e^{2 \pi x_j / L}
  \end{equation}  
  we see that
   \begin{equation}\label{1.17}   
   p_N^{(SW_e)}(x_1,\dots,x_N) dx_1 \cdots dx_N =   p_N^{(SW)}(u_1,\dots,u_N) du_1 \cdots du_N ,
   \end{equation}
   where
 \begin{equation}\label{1.18}     
 p_N^{(SW)}(u_1,\dots,u_N) = {1 \over C_{N}^{(SW)}(q)}  \prod_{l=1}^N w^{(SW)}(u_l;q) 
 \prod_{1 \le j < k \le N} (u_k - u_j)^2, \quad u_l \in \mathbb R^+ \: (l=1,\dots,N).
 \end{equation} 
 In (\ref{1.18})
  \begin{equation}\label{1.19} 
w^{(SW)}(u;q)   = {k \over \sqrt{\pi}} e^{- k^2 ( \log u)^2} 
\end{equation}
and, with $q$ again given by  (\ref{1.16a}),
  \begin{equation}\label{1.19a}  
C_{N}^{(SW)}(q) = N!   q^{- {1 \over 6} N ( 2N - 1)( 2 N + 1)}
  \prod_{j=1}^{N-1} (1 - q^j)^{N-j}.
  \end{equation} 
  
  It is well known \cite{Sz75} that the polynomials on the half line $ u > 0$,
  orthonormal with respect to the weight function $w^{(SW)}(u;q)$, are the
  Stieltjes--Wigert polynomials
 \begin{equation}\label{1.20}  
  S_l(u;q) := {(-1)^l q^{l/2 + 1/4} \over
\{ (1 - q)(1-q^2) \cdots (1 - q^l) \}^{1/2}}
\sum_{\nu = 0}^l \bigg [ {l \atop \nu} \bigg ]_q
q^{\nu^2} (-q^{1/2} u)^\nu,
\end{equation}
where, with
 \begin{equation}\label{13a}
  [n]_q! := {(q;q)_n \over (1 - q)^n}, \qquad (u;q)_n := (1 - u) (1 - q u ) \cdots (1 - q^{n-1} u) ,
 \end{equation}
 the quantities
 \begin{equation}\label{13}
   \Big [ {n \atop m} \Big ]_q = { [n]_q! \over [n - m]_q! [m]_q!}, 
   \end{equation}
are the $q$-binomial coefficients.   
This is the reason for the label $(SW)$ in (\ref{1.18}); the label $(SW_e)$ used in 
(\ref{1.15}) is to indicate the underlying  Stieltjes--Wigert polynomials in exponential variables.

In relation to the PDF (\ref{1.1}) we have indicated four distinct interpretations in the context of
theoretical/ mathematical physics. All  have
analogues for the PDF (\ref{1.15}). These will be reviewed in Section \ref{S2}. It is also true that (\ref{1.15})
has a further  interpretation relative to  (\ref{1.1}) --- this is in relation to a partition function
which occurs in Chern--Simons field theory \cite{Ma04,Ti04}. The various applications have resulted in a number
of works studying properties of (\ref{1.15}); see for example
\cite{Fo89,Fo94x,Ma04a,dHT05,DT07, Ok07, BS07, Ti09, HY09, Ti10, ST10, BMS11,Ma11,TK14,Ti17,DIW19}.

Notwithstanding this previous literature, in light of known properties of (\ref{1.1}), there are still some fundamental properties 
of (\ref{1.15}) which remain to be investigated. Here we consider two of these. 
The first relates to the moments (in exponential variables) of the density for (\ref{1.15}), or equivalently the usual moments 
defined as power sum averages  of
 (\ref{1.17}), and their consequence in relation to the computation of the global density in the latter.
 We are motivated by the fact that the moments for (\ref{1.1}),
  \begin{equation}\label{1.23} 
  m_{2l}^{(G)} = \Big \langle \sum_{j=1}^N  x_j^{2l} \Big \rangle_{(G)}  \qquad (l \in \mathbb Z_{\ge 0}),
  \end{equation} 
  have the closed form hypergeometric evaluation \cite[Th.~8]{WF14} 
  \begin{equation}\label{1.24}  
  2^l m_{2l}^{(G)} = N { (2l)! \over 2^l l!} \, {}_2 F_1 ( - l, 1-N;2;2).
   \end{equation} 
   (With $x_j$ replaced by $|x_j|$ in (\ref{1.23}), this evaluation in fact extends to complex $l$ \cite{CMOS19}.)
 The second relates to the edge scaling limit of the correlation kernel  for (\ref{1.15}), and its relation to the well known Airy kernel
 specifying the edge scaling limit of the correlation kernel $K^{(G)}$ for (\ref{1.1}) \cite{Fo93c}
 \begin{align}\label{SA}
 K^{(G)}_{\rm edge}(X,Y) & := \lim_{N \to \infty} {1 \over \sqrt{2} N^{1/6}} K^{(G)}(x,y) \Big |_{x = \sqrt{2N} + X/\sqrt{2}N^{1/6} \atop
 y =  \sqrt{2N} + Y/\sqrt{2}N^{1/6} } \nonumber \\
 & = { {\rm Ai} (X)   {\rm Ai}' (Y)   -    {\rm Ai} (Y)   {\rm Ai}' (X)   \over X  - Y }.
 \end{align}
 The functional form of the edge scaling limit of the correlation kernel for (\ref{1.15}), $K^{(SW_e)}_{\rm edge}(X,Y) $,
 is known from
 \cite{Fo94x}, and is given in terms of the special function (well defined for $|q| < 1$)
 \begin{equation}\label{1.20y} 
A_q(z) := \sum_{\nu = 0}^\infty {q^{\nu^2} (- z)^\nu \over (q; q)_\nu} .
\end{equation}
The specific problem to be addressed is to identify scaling variables, with $X,Y$ dependent on $L$,
so that in the limit $L \to \infty$ the kernel $K^{(SW_e)}_{\rm edge}(X,Y) $ reduces to (\ref{SA}).

The moments of interest for the models ($SW_e$) and $(SW)$ are specified by
 \begin{align}\label{1.22}  
 m_l^{(SW_e)}  = \Big \langle \sum_{j=1}^N e^{{2 \pi \over L} (x_j + {\pi \over L c} ) l} \Big \rangle_{(SW_e)}
  =  \Big \langle \sum_{j=1}^N  u_j^l \Big \rangle_{(SW)} \qquad (l \in \mathbb Z).
 \end{align}
 We will show in Section \ref{S3} that (\ref{1.22}), like its GUE counterpart (\ref{1.23}),
   admits a hypergeometric evaluation.  This involves the
   $q$-generalisation of the Gauss ${}_2 F_1$ function
  \begin{equation}\label{14a+}
  {}_2 \phi_1 \Big ( {a_1, a_2 \atop b_1} \Big | q;z \Big ) = \sum_{n=0}^\infty {(a_1;q)_n (a_2; q)_n \over (q;q)_n (b_1;q)_n} z^n,
   \end{equation} 
  or equivalently the little $q$-Jacobi polynomial
   \begin{equation}\label{14c+}  
   p_n^{(lq\text{-}J)}(x;a,b|q) = {}_2 \phi_1  \Big ( {q^{-n}, a b q^{n+1} \atop a q} \Big | q; q x \Big ).
   \end{equation}    
  
  \begin{proposition}\label{P1}
  Let $q$ be given by (\ref{1.16a}) and let $l \in \mathbb Z^+$. We have
   \begin{align}\label{1.25}     
  {1 \over N} q^{N l}  m_l^{(SW_e)}   & = - {1 \over N} {( - q^{-1/2} )^l \over 1 - q^{-l}} \,
  {}_2 \phi_1  \Big ( {q^{l}, q^{-l} \atop q^{-1}} \Big | q^{-1}; q^{-N - 1} \Big )  \nonumber \\
  &  =  - {1 \over N} {( - q^{-1/2} )^l \over 1 - q^{-l}}  p_l^{(lq\text{-}J)}(q^{-N};1,q|q^{-1}).
  \end{align}
  \end{proposition}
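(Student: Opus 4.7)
The proof starts from the standard orthonormal-polynomial-ensemble formula
\begin{equation*}
m_l^{(SW_e)}=m_l^{(SW)}=\sum_{j=0}^{N-1}\int_0^\infty u^l\,S_j(u;q)^2\,w^{(SW)}(u;q)\,du,
\end{equation*}
combined with the log-normal moment evaluation, obtained by substituting $u=e^t$ and completing the square,
\begin{equation*}
\int_0^\infty u^m\,w^{(SW)}(u;q)\,du=q^{-(m+1)^2/2}.
\end{equation*}

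Substituting the explicit series (\ref{1.20}) into $S_j(u;q)^2$ and integrating term by term produces a triple $q$-series in $(j,\nu,\mu)$. The combined exponent of $q$ simplifies via the algebraic identity $\nu^2+\mu^2+(\nu+\mu)/2-(\nu+\mu+l+1)^2/2=(\nu-\mu)^2/2-(\nu+\mu)(l+\tfrac{1}{2})-\tfrac{(l+1)^2}{2}$, after which the $q$-binomial theorem $\sum_{\nu=0}^{j}\Big[{j\atop\nu}\Big]_q q^{\binom{\nu}{2}}z^\nu=(-z;q)_j$ performs the $\nu$-sum in closed form with $z=-q^{-\mu-l}$. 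This reduces the triple sum to a double sum over $(j,\mu)$ whose summand involves the factor $(q^{-\mu-l};q)_j$ together with $(-1)^\mu q^{\mu(\mu-2l-1)/2}$.

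The third step is to interchange the order of summation and evaluate the $j$-sum using the splitting $(q^{-\mu-l};q)_j=(q^{-\mu-l};q)_\mu(q^{-l};q)_{j-\mu}$, which reduces it to the truncated ${}_1\phi_0$ sum $\sum_{m=0}^{N-1-\mu}(q^{-l};q)_m q^m/(q;q)_m$. Since $(q^{-l};q)_m$ vanishes for $m>l$ and the corresponding un-truncated ${}_1\phi_0$ equals zero for $l\geq 1$ (by the $q$-binomial theorem, it evaluates to $(q^{1-l};q)_\infty/(q;q)_\infty$), the truncated partial sum can be re-expressed, via a terminating $q$-hypergeometric reversal, as a length-$(l+1)$ series with base $q^{-1}$ and argument $q^{-N-1}$. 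Collecting the $\mu$-dependent prefactors and applying this reversal assembles the answer into the form $-\frac{(-q^{-1/2})^l}{1-q^{-l}}\,{}_2\phi_1\Big({q^l,q^{-l}\atop q^{-1}}\,\Big|\,q^{-1};q^{-N-1}\Big)$, and the second equality in (\ref{1.25}) is then immediate from the definition (\ref{14c+}) on setting $a=1$, $b=q$, $n=l$ and replacing the base $q$ by $q^{-1}$. The principal obstacle is precisely this $q$-hypergeometric reversal: the sum arising naturally from the computation has length $\min(l,N)$, while the answer is packaged as a terminating ${}_2\phi_1$ of length $l+1$, and matching the two requires a careful application of a transformation relating partial ${}_1\phi_0$ sums to terminating ${}_2\phi_1$ series.
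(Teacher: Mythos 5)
Your route is genuinely different from the paper's: the paper never touches the Christoffel--Darboux sum $\sum_{j=0}^{N-1}S_j(u;q)^2$, but instead evaluates the Schur polynomial average $\langle s_\kappa\rangle_{(SW)}$ by the Andr\'eief identity (the determinant of log-normal moments collapses to a Vandermonde determinant), and then expands the power sum as $\sum_j u_j^l=\sum_r(-1)^r s_{(l-r,1^r)}$, which delivers a clean \emph{single} sum of $q$-binomial products (\ref{13b}) that is then matched to the terminating ${}_2\phi_1$. Your opening steps are correct and I have checked them: the exponent identity holds, the $\nu$-sum is exactly $(-z;q)_j$ with $z=-q^{-\mu-l}$, the splitting $(q^{-\mu-l};q)_j=(q^{-\mu-l};q)_\mu(q^{-l};q)_{j-\mu}$ is valid for $j\ge\mu$, and $\big[{j\atop\mu}\big]_q/(q;q)_j=1/\big((q;q)_\mu(q;q)_{j-\mu}\big)$ cleanly produces the truncated ${}_1\phi_0$.

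The gap is the last step, which is where the content of Proposition \ref{P1} actually lies. After your interchange, the object to be evaluated is $\sum_{\mu=0}^{N-1}c_\mu(q)\,T_{N-1-\mu}$ with $T_M=\sum_{m=0}^{M}(q^{-l};q)_m q^m/(q;q)_m$ and $c_\mu$ the collected prefactors. Since the complete sum $T_l$ vanishes for $l\ge1$, one has $T_{N-1-\mu}=-\sum_{m=N-\mu}^{l}(q^{-l};q)_m q^m/(q;q)_m$, nonzero only for $\mu\ge N-l$; what remains is therefore a genuine \emph{double} sum over $(\mu,m)$, not ``a sum of length $\min(l,N)$'' as you describe. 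The assertion that ``the truncated partial sum can be re-expressed, via a terminating $q$-hypergeometric reversal, as a length-$(l+1)$ series with argument $q^{-N-1}$'' does not parse as written: $T_{N-1-\mu}$ depends on $\mu$, so it cannot by itself become something $\mu$-independent, and if you mean the assembled double sum, then your ``reversal'' is being asked to perform simultaneously the re-expression of each $T$ and the summation over $\mu$ --- no identity accomplishing this is exhibited or even precisely named. You flag this yourself as ``the principal obstacle,'' which is an accurate self-assessment. To close the argument you would need to interchange the $(\mu,m)$ sums, evaluate the inner $\mu$-sum in closed form (a $q$-Chu--Vandermonde type summation is what is needed here), and only then identify the surviving $m$-sum with the stated ${}_2\phi_1$; none of this is carried out, so the proof is incomplete at its decisive point.
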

  
  Simple manipulation of the series definition of $ {}_2 \phi_1$ on the RHS of (\ref{1.25}) shows that it is
  in fact a function of $(q^{1/2} - q^{-1/2})$, and of $q^{-N}$. This is in keeping with the well known
  fact \cite{Ma04} that the resolvent corresponding to (\ref{1.15}), and thus the moments, permit an
  expansion in $1/N^2$ upon setting $2k^2 = N/\lambda$ in (\ref{1.16a}) so that
  \begin{equation}\label{14c}
  q = e^{-\lambda/N}.
  \end{equation} 
  With this choice of $q$, the large $N$ scaling limit of (\ref{1.25}), and thus the leading
  term in the $1/N^2$ expansion, is almost immediate.
  
  \begin{corollary}\label{C1} 
  For $l \in \mathbb Z^+$ we have
    \begin{equation}\label{16}
 \mu_{l,0}^{(SW^*)} :=   \lim_{N \to \infty} { q^{N l} \over N}  m_l^{(SW_e)}    \Big |_{  q = e^{-\lambda/N} } =  {(-1)^l \over \lambda l} \, {}_2 F_1 (-l,l;1;e^{\lambda}).
  \end{equation} 
  \end{corollary}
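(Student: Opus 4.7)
The plan is to substitute the closed-form identity from Proposition \ref{P1} and then take the $N \to \infty$ limit termwise in the series defining the ${}_2\phi_1$. The structural observation that drives the proof is that, because $l \in \mathbb{Z}^+$, the $q$-Pochhammer symbol $(q^l;q^{-1})_n$ contains the vanishing factor $(1-q^{l-l}) = 0$ as soon as $n \geq l+1$; hence the ${}_2\phi_1$ on the right-hand side of (\ref{1.25}) is in fact a polynomial of degree at most $l$ in the variable $q^{-N-1}$. The limit therefore reduces to evaluating a finite sum term by term, with no convergence issue to address.

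First I would treat the prefactor $-\frac{1}{N}\frac{(-q^{-1/2})^l}{1-q^{-l}}$ in (\ref{1.25}). Substituting $q = e^{-\lambda/N}$, one has $(-q^{-1/2})^l \to (-1)^l$ while $1-q^{-l} = 1 - e^{\lambda l/N} \sim -\lambda l/N$, so this prefactor tends to $(-1)^l/(\lambda l)$. Next, for each fixed $n \in \{0,1,\ldots,l\}$ I would compute the limit of the coefficient $\frac{(q^l;q^{-1})_n (q^{-l};q^{-1})_n}{[(q^{-1};q^{-1})_n]^2}\, q^{-(N+1)n}$. Using $1 - q^{\alpha} \sim -\lambda\alpha/N$ for bounded $\alpha$, each $q$-Pochhammer contributes a factor of order $(\lambda/N)^n$ multiplied by a product of integers; the powers of $\lambda/N$ cancel between numerator and denominator, while $q^{-(N+1)n} \to e^{\lambda n}$. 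After tracking signs, the $n$-th term converges to $\frac{(-l)_n (l)_n}{(1)_n\, n!}\, e^{\lambda n}$, which is exactly the $n$-th term of the series ${}_2F_1(-l,l;1;e^{\lambda})$.

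Assembling the prefactor with the resulting sum yields the claim. There is no substantive obstacle: the only care required is the routine conversion of the $q$-Pochhammer symbols $(q^{\pm l};q^{-1})_n$ into ordinary Pochhammer symbols $(\mp l)_n$, where a handful of signs must be tracked carefully, and the interchange of limit and sum is automatic since the sum is finite.
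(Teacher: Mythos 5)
Your proposal is correct and follows exactly the route the paper intends (the paper merely calls the limit ``almost immediate'' from Proposition \ref{P1}): the ${}_2\phi_1$ terminates at $n=l$, the prefactor gives $(-1)^l/(\lambda l)$, and termwise limits of the $q$-Pochhammer ratios produce the ${}_2F_1$ series. The only blemish is the displayed rule $1-q^\alpha\sim-\lambda\alpha/N$, which should read $1-q^\alpha\sim\lambda\alpha/N$ for $q=e^{-\lambda/N}$; your worked instances and the final answer are nonetheless consistent and correct.
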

  
  In Section \ref{S3.2} the result (\ref{16}) will be used to give a new derivation of the
  corresponding limiting scaled spectral density of the ensemble specified by (\ref{1.18}).

  The correlation kernel for (\ref{1.15}) is specified in Section \ref{S4.1}, and its bulk and edge scaling
  limits, already known from \cite{Fo94x}, are revised in Section \ref{S4.2}. In Section \ref{S4.3},
  a suitable asymptotic expansion known from \cite{HP15,Ha17} is used to deduce the sought $L \to \infty$ scaling
  limit of $K^{(SW_e)}_{\rm edge}$ reclaiming (\ref{SA}).
  
    \begin{proposition}\label{P1d}
  Let $K^{(SW_e)}_{\rm edge}(X,Y)$ be specified by (\ref{4.10}) below, and thus be determined by
  the function $A_q(z)$ as defined in (\ref{1.20y}) . Let
   \begin{equation}\label{16r}  
   \epsilon = {2 \pi^2 \over c L^2}, \quad X(x,L) = {L \over 2 \pi} \log {1 \over 4} - {L \over 2 \pi} \epsilon^{2/3} x, 
   \quad Y(y,L) = {L \over 2 \pi} \log {1 \over 4} - {L \over 2 \pi} \epsilon^{2/3} y.
   \end{equation}   
  We have
    \begin{equation}\label{16x}
 -  \lim_{L \to \infty}   {L \over 2 \pi}   \epsilon^{2/3} K^{(SW_e)}_{\rm edge}(X(x,L),Y(y,L)) =    K^{(G)}_{\rm edge}(x,y). 
   \end{equation}    
   \end{proposition}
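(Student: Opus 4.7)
The plan is to recognize that the limit $L\to\infty$ coincides, via $\epsilon=2\pi^2/(cL^2)$ and $q=e^{-\epsilon}$ from (\ref{1.16a}), with the limit $q\to 1^-$. In this regime the Ramanujan function $A_q(z)$ of (\ref{1.20y}) is known from \cite{HP15,Ha17} to degenerate (after an Airy-type, cube-root rescaling of its argument) to the classical Airy function $\mathrm{Ai}$. The exponent $2/3$ in the factor $\epsilon^{2/3}$ of (\ref{16r}) is precisely this Airy rescaling, while the outer multiplier $-\tfrac{L}{2\pi}\epsilon^{2/3}$ in (\ref{16x}) plays the role of the Jacobian $dX/dx$, converting $K^{(SW_e)}_{\rm edge}$ from its natural $X$-coordinate to the rescaled $x$-coordinate, and providing the analogue of the prefactor $1/(\sqrt{2}N^{1/6})$ in (\ref{SA}).

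The first substantive step is to substitute the scalings (\ref{16r}) into the explicit Christoffel--Darboux form of $K^{(SW_e)}_{\rm edge}(X,Y)$ from (\ref{4.10}), in which $A_q$ plays the role of the leading orthogonal polynomial and a $q$-shifted companion (such as $A_q(qz)$) plays the role of the subleading one. Since $X(x,L)=\tfrac{L}{2\pi}\log\tfrac{1}{4}-\tfrac{L}{2\pi}\epsilon^{2/3}x$, one has
\begin{equation*}
e^{2\pi X(x,L)/L}=\tfrac{1}{4}\exp(-\epsilon^{2/3}x),
\end{equation*}
so that each argument of $A_q$ appearing in the kernel takes the canonical form to which the Hayman--Power asymptotic expansion of \cite{HP15,Ha17} applies; schematically,
\begin{equation*}
A_q\bigl(z(x,q)\bigr)=\Phi(x,q)\bigl(\mathrm{Ai}(x)+O(\epsilon^{1/3})\bigr)\qquad (q=e^{-\epsilon},\ \epsilon\to 0^+),
\end{equation*}
uniformly for $x$ in compact subsets of $\mathbb{R}$, with $\Phi(x,q)$ an explicit non-vanishing (essentially Gaussian) prefactor.

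The second step is to insert this asymptotic, together with the analogous one for the $q$-shifted companion, which will yield a factor of the form $\Psi(x,q)\bigl(\mathrm{Ai}'(x)+o(1)\bigr)$ with $\Psi/\Phi$ tending to a non-zero constant, into the Christoffel--Darboux bilinear in the numerator. The common prefactors $\Phi(x,q)\Phi(y,q)$ constitute a gauge multiplier on the kernel that is either absorbed by the gauge freedom of the determinantal process or, more concretely, cancels against the normalization implicit in $K^{(SW_e)}_{\rm edge}$ (in particular the weight-function prefactors), leaving $\mathrm{Ai}(x)\mathrm{Ai}'(y)-\mathrm{Ai}(y)\mathrm{Ai}'(x)$. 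The denominator $X-Y$ equals $-\tfrac{L}{2\pi}\epsilon^{2/3}(x-y)$, whose prefactor exactly cancels the Jacobian on the left-hand side of (\ref{16x}), producing the Airy kernel of (\ref{SA}).

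The main obstacle is tracking the companion function so that its $q\to 1$ limit yields $\mathrm{Ai}'$ and not merely a scalar multiple of $\mathrm{Ai}$. This is a Wronskian-type calculation inside the asymptotic: the difference $A_q(z)-A_q(qz)$ must be shown to balance against the continuous derivative at the rate $\epsilon^{2/3}$, which is exactly what fixes the cube-root exponent in (\ref{16r}). A secondary technical point is establishing the uniformity of the Hayman--Power asymptotic on compacta in $x,y$, so that one may pass to the limit inside the Christoffel--Darboux quotient, with the coincidence case $x=y$ handled by continuity.
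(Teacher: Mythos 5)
Your overall strategy coincides with the paper's: substitute the scalings (\ref{16r}) into the Christoffel--Darboux form (\ref{4.10}) and invoke the Airy-type asymptotics of $A_q$ from \cite{HP15,Ha17}. But as written the argument has a genuine gap at exactly the point you flag as ``the main obstacle,'' and flagging it is not the same as resolving it. Your expansion $A_q(z(x,q))=\Phi(x,q)\bigl(\mathrm{Ai}(x)+O(\epsilon^{1/3})\bigr)$ keeps only the leading term. In the kernel (\ref{4.10}) both factors $A_q(q^{1/2}\,\cdot\,)$ and $A_q(q^{-1/2}\,\cdot\,)$ are, to leading order, the \emph{same} multiple of $\mathrm{Ai}$ --- neither one ``plays the role of'' $\mathrm{Ai}'$ --- so the leading $\mathrm{Ai}\otimes\mathrm{Ai}$ contributions cancel identically in the antisymmetrized numerator, and the surviving term lives entirely inside your unspecified $O(\epsilon^{1/3})$ error. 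The argument therefore cannot produce the Airy kernel without the explicit second term. The paper's proof uses the two-term uniform asymptotic (\ref{4.21}), $A_q(z)=\tfrac12 (q;q)_\infty e^{\beta/\epsilon}\bigl(\mathrm{Ai}(\alpha/\epsilon^{2/3})\epsilon^{1/3}-\mathrm{Ai}'(\alpha/\epsilon^{2/3})\epsilon^{2/3}\bigr)(1+O(\epsilon))$, together with the observation that the shifts $q^{\pm 1/2}$ change the coefficient of $\mathrm{Ai}'$ from $-1$ to $-\tfrac12$ and $-\tfrac32$ respectively; it is precisely the difference of these two coefficients that survives the antisymmetrization and yields $\mathrm{Ai}(x)\mathrm{Ai}'(y)-\mathrm{Ai}(y)\mathrm{Ai}'(x)$ with the correct normalization. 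You would need to extract and track this second term explicitly; the heuristic that ``$A_q(z)-A_q(qz)$ balances against the continuous derivative'' is the right intuition but is not a substitute for the coefficient bookkeeping.

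A secondary error: the common prefactor $\Phi(x,q)\Phi(y,q)$ is \emph{not} a gauge factor. Gauge freedom of a determinantal kernel covers conjugations $f(x)K(x,y)/f(y)$, i.e.\ factors of the form $\Phi(x)/\Phi(y)$; a symmetric product $\Phi(x)\Phi(y)$ genuinely changes the correlation functions and must be shown to converge to the right constant. In the paper this is done concretely: the factors $e^{\beta/\epsilon}$ from (\ref{4.21}), the Gaussian weights $e^{-c(X^2+Y^2)/2}$, and the $(q;q)_\infty$ factors are combined using $\log (q;q)_\infty=-\pi^2/(6\epsilon)+\log\sqrt{2\pi/\epsilon}+O(\epsilon)$, and only then does the stated limit (\ref{16x}) emerge. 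Your alternative suggestion that the prefactors cancel against the weight functions is the correct mechanism, but it needs to be verified rather than asserted.
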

  
  \section{Interpretations of the PDF (\ref{1.15})}\label{S2}
  \subsection{Boltzmann factor of a classical gas}\label{S2.1}
  It is immediate that the PDF (\ref{1.15}) can be written in a Boltzmann factor form analogous to (\ref{1.1})
  \begin{equation}\label{p1}
  p_N^{(SW_e)} \propto e^{-\beta U_L(x_1,\dots,x_N)}, \qquad U_L := {c \over 2} \sum_{j=1}^N x_j^2 - \sum_{1 \le j < k \le N}
  \log \Big | \sinh {\pi (x_k - x_j) \over L} \Big |, \: \: \beta = 2.
   \end{equation} 
   What remains is to interpret the pair potential in this expression; the one-body term is simply an harmonic attraction
   towards the origin as in (\ref{1.1}), with a scale factor $c$. For this consider the pair potential $\Phi(\mathbf r, \mathbf r')$
   due to 
   a two-dimensional unit charge in the plane at point $\mathbf r = (x,y)$, and another at point $\mathbf r' = (x',y')$. With
   $\mathbf r'$ regarded as fixed, this pair potential must satisfy the two-dimensional Poisson equation
   $\nabla^2  \Phi(\mathbf r, \mathbf r')  = - 2 \pi \delta(\mathbf r'  - \mathbf r)$. Require too that the charges are restricted
   (at first) to the strip $0 \le y < L$ in the plane, and subject to semi-periodic boundary conditions
   $ \Phi((x, y+L),(x',y')) =  \Phi((x, y),(x',y'))$. Equivalently $2 \pi y / L$ can be regarded as the angular position on a cylinder,
   and with $x$ corresponding to the height. The explicit form of $\Phi$ is (see e.g.~\cite[\S 2.7]{Fo10})
   \begin{equation}\label{p2}
    \Phi((x, y),(x',y')) =   - \log \bigg (  \Big | \sin ( \pi ( y - y') + i (x - x'))/L) \Big | (L/\pi) \bigg ).
    \end{equation} 
    Requiring that all charges, confined at first to the strip  $0 \le y < L$, be further confined to the $x$-axis (or
    $\theta = 0$ in the cylinder picture, varying only in their height) we have that $y = y'$ in (\ref{p2}),
    which is then recognised as the pair potential in (\ref{p1}). Note that in this circumstance (\ref{p2}) exhibits the
    large separation asymptotic behaviour
  \begin{equation}\label{p3}   
    \Phi((x, 0),(x',0)) \mathop{\sim}_{|x - x'| \to \infty} - {\pi | x - x'| \over L},
    \end{equation}
    which is in fact proportional to the Coulomb potential in one-dimension.
    
 \subsection{Ground state wave function}
  Notwithstanding the determinantal structure associated with $p_N^{(SW)}$, as evidenced
  by (\ref{1.18}) and (\ref{1.9}), there is  no free Fermi system with a ground state wave function
  equal to either (\ref{1.15}) or (\ref{1.17}). The essential point here is that the class of
  single-particle Schr\"odinger operators which permit wave functions of the form
  $\sqrt{w(x)} p_l(x)$ for some weight function $w(x)$ and orthogonal polynomials
  $\{ p_l(x) \}$ is extremely limited --- this is quantified by Bocher's theorem
  \cite{Bo29}.
  
  On the other hand, it turns out \cite{Fo94x} that not only $p_N^{(SW_e)}$, but also its
  $\beta$-generalisation in the sense of (\ref{p1}), has an interpretation of a squared ground
  state wave function for the particular many body Schr\"odinger operator of
  Calogero--Sutherland type (see \cite{KK09} for an extended account of this class of quantum
  many body systems)
  \begin{multline}\label{p4}
  \mathcal H = - \sum_{j=1}^N {\partial^2 \over \partial x_j^2} + a^2 \sum_{j=1}^N x_j^2 -
  {m a \over L} \sum_{1 \le i < j \le N} (x_i - x_j) \coth \Big ( {x_i - x_j \over 2 L} \Big ) \\
  + {m (m - 1) \over 2 L} \sum_{1 \le j < k \le N} {1 \over \sinh^2 ((x_j - x_k)/2 L)}, \qquad m :=\beta/2.
  \end{multline}  
  In particular when $\beta = 2$ as in (\ref{1.15}), the second pair potential term in (\ref{p4}) vanishes
  but the first such term remains. Denoting this term $\sum_{1 \le i < j \le N} V(x_i - x_j)$, we see that
  analogous to (\ref{p3})
   \begin{equation}\label{p5} 
  V(x_i - x_j)  \mathop{\sim}_{|x_i - x_j | \to \infty}  - {m a \over L}  |x_i - x_j|. 
  \end{equation}
  
  \subsection{Eigenvalue PDF on the space of complex positive definite matrices}
  From the viewpoint of the set of positive definite matrices $\mathcal P_N$ as an
  example of a Riemannian manifold, the squared geodesic distance between
  $A,B \in \mathcal P_N$, $(d(A,B))^2$ say, is given by (see e.g.~\cite{Ba07})
   \begin{equation}\label{AB1}  
  (d(A,B))^2 = \sum_{j=1}^N \Big ( \log \lambda_j (A^{-1} B) \Big )^2, 
   \end{equation}
   where $\{ \lambda_j(M) \}_{j=1}^N$ denotes the eigenvalues of $M$. If the eigenvalue of
   $A$ are parametrised as
     \begin{equation}\label{AB1a} 
   \{  e^{x_j}  \}_{j=1}^N,
  \end{equation}  
  and $B$ is the identity, (\ref{AB1}) simplifies to read
    \begin{equation}\label{AB2}   
  (d(A,I))^2 = \sum_{j=1}^N x_j^2.
    \end{equation}  
    
    In \cite{SHBV17}, a density for the invariant Riemannian volume element, $d \mu (A)$ say, on $\mathcal P_N$
   proportional to 
     \begin{equation}\label{AB2a} 
     e^{- c (d (A,I))^2}
   \end{equation}     
   is proposed. A decaying density function is in fact necessary:
   like the Lebesgue measure on $\mathbb R$, $d \mu (A)$ itself is not normalisable. For complex
   positive definite matrices the volume element  has the explicit form 
   \begin{equation}\label{AB2b}  
   d \mu (A) = {1 \over (\det A)^{N}} (dA),
    \end{equation} 
   where $(dA)$ denotes the Legesgue measure for the independent entries of $A$ (both real and imaginary
    part for the off diagonal entries). 
   The required invariance $d \mu (A) = d \mu (M^{1/2} A M^{1/2})$
   can be checked using
  \cite[Exercise 1.3 q.2]{Fo10}; contrast this to the case of $A \in$GL$(\mathbb C)$, as discussed in e.g.~\cite[\S 2.1]{FZ17},
  for which the exponent on the RHS of (\ref{AB2b}) is $2N$.

    Changing variables to the eigenvalues and eigenvectors, it is well known that the eigenvector
    contribution factorises (see e.g.~\cite[Prop.~1.3.4]{Fo10} and so can be integrated out. Finally, parametrising
    the eigenvalues according to (\ref{AB1a}) gives the PDF
     \begin{equation}\label{AB2c}    
     {1 \over Z_{N,c}} e^{- c \sum_{j=1}^N x_j^2} \prod_{1 \le j < k \le N} \Big ( \sinh ((x_k - x_j)/2) \Big )^2,
       \end{equation}  
  and is thus identical to (\ref{1.15}) with $L = 2 \pi$.  
  
  \begin{remark}
  1.~Of interest in \cite{SHBV17} is $\langle \sum_{j=1}^N x_j^2 \rangle_{(SW_e)} |_{L = 2 \pi} $. We see from the definitions
  that
      \begin{equation}\label{AB2d}  
 \Big  \langle \sum_{j=1}^N x_j^2    \Big \rangle_{(SW_e)}   \Big |_{L= 2 \pi} = - {d \over d c}  \log Z_{N,c}.
  \end{equation}    
  Since $   Z_{N,c} =  C_{N,c}^{(SW_e)} (q) |_{L = 2\pi}$, it follows from (\ref{1.15a}) that this can
  be computed exactly. \\
  2.~For recent further developments of the theme of this subsection, see \cite{ST20}.
  \end{remark}
    
   \subsection{Non-intersecting Brownian walkers with equal spacing initial condition}
   Consider the Brownian walker problem of the paragraph containing (\ref{1.13}). Inserting the
   explicit form of $u_t$ in (\ref{1.14}) shows
    \begin{equation}\label{q1}
    G_t(\mathbf x^{(0)}; \mathbf x) = \Big ( {1 \over 2 \pi D t} \Big )^{N/2} e^{- \sum_{j=1}^N ( x_j^2 + (x_j^{(0)})^2)/2 D t}
    \det \Big [ e^{ x_j^{(0)} x_k/ Dt} \Big ]_{j,k=1}^N.
     \end{equation}
     In the case of  the equal  spacing initial condition $x_j^{(0} = (j-1)a$ ($j=1,\dots,N$), use of the Vandermonde
     formula (\ref{1.8}) shows (\ref{q1}) simplifies to
    \begin{equation}\label{q1a}
     G_t(\mathbf x^{(0)}; \mathbf x) \Big |_{x_j^{(0)} = (j - 1)a} = \Big ( {1 \over 2 \pi D t} \Big )^{N/2}
     e^{- \sum_{j=1}^N ( x_j^2 + (N - 1) a x_j + (j-1)^2 a^2)/ 2 D t}
     \prod_{1 \le j < k \le N} 2 \sinh {a (x_k - x_j) \over 2 D t}.
     \end{equation}
     Notice that after the simple change of variables
     \begin{equation}\label{q1b}
     x_j \mapsto x_j - {(N - 1)a \over 2},
     \end{equation}
     and with
    \begin{equation}\label{q1c}     
    c = 1 / Dt,
  \end{equation}
  this is proportional to the Boltzmann factor (\ref{p1}) with $\beta = 1$. 
  
  To obtain (\ref{p1}) with $\beta = 2$, require that after arriving at positions $\mathbf x$
  in time $t$, the walkers return to the same equal spacing configuration of their
   initial condition in further time $t$. The corresponding PDF is
   $$
   {G_t(\mathbf x_0, \mathbf x)  G_t(\mathbf x, \mathbf x_0) \over
   G_{2t}(\mathbf x_0, \mathbf x_0) }  \bigg |_{x_j^{(0)} = (j - 1)a \: }.
   $$
 After the change of variables (\ref{q1b}), and with $c$ given by (\ref{q1c}), this is seen to
 reduce to (\ref{1.15}), reproducing too the explicit value of the normalisation (\ref{1.15a}) apart
 from a factor of $N!$ which is accounted for by the ordering $x_1 < \cdots < x_N$
 assumed in (\ref{q1}).
 
 \subsection{Chern--Simons partition function}
 Consider the Chern--Simons action with gauge group $U(N)$ of the 3-sphere,
 and with coupling strength $k/4 \pi$ (see e.g.~\cite{Ma04a}). It was shown in
\cite{Wi89} that the corresponding partition function, $Z_{\mathbb S^3}$ say,
of significance from the fact that it is a topological invariant, has the evaluation
 \begin{equation}\label{2.4.1}
 Z_{\mathbb S^3} = {1 \over (k + N)^{N/2}} \sum_{w \in P_N} \varepsilon (w) \exp \Big (
 - {2 \pi i \over k + N} \rho \cdot w(\rho) \Big ),
  \end{equation}
  where $P_N$ denotes the set of permutations of $\{1,2,\dots,N\}$, $ \varepsilon (w)$ is the
  signature of the permutation $w$ and $\rho = {1 \over 2} (N-1,N-3,\dots,-N+1)$ is the Weyl vector of SU$(N)$.
  The observation of \cite{Ma04} is that the sum in (\ref{2.4.1}) can be recognised as the
  determinant in (\ref{2.4}) below.  In this formula  the partition $\kappa$ is to have all parts equal to zero, implying
  that $Z_{\mathbb S^3}$ can be expressed in terms of the integral over the Boltzmann factor
  (\ref{p1}). One reads off too that the  coupling constants are related by
  $$
  {1 \over c} = {4 \pi i \over k + N}.
  $$

  \section{Proof of Proposition \ref{P1} and its consequences}\label{S3}
  \subsection{A Schur function average}
  Our approach to establishing (\ref{1.25}) requires knowledge of the evaluation of the average value of
  the Schur polynomial
   \begin{equation}\label{2.1}     
   s_\kappa(u_1,\dots,u_N) := {\det [ u_k^{\kappa_j + j - 1} ]_{j,k=1}^N \over \det [ u_k^{j-1} ]_{j,k=1}^N },
 \end{equation} 
 where $\kappa$ denotes the partition
    \begin{equation}\label{2.1a} 
    \kappa = (\kappa_1,\kappa_2,\dots, \kappa_N), \qquad \kappa_1 \ge \kappa_2 \ge \cdots \ge \kappa_N,
    \end{equation}
    each $\kappa_i$ being a non-negative integer,
 with respect to the PDF (\ref{1.18}). This can be found in the work of Dolivet and Tierz \cite{DT07}.
 We will provide a different derivation.
 
 \begin{proposition}\label{P2}
 Let $\kappa$ be a partition as in (\ref{2.1a}), and denote $| \kappa | = \sum_{j=1}^N \kappa_j$. We have
 \begin{equation}\label{9a}    
q^{N | \kappa | }  \langle     s_\kappa(u_1,\dots,u_N)  \rangle_{(SW)} = q^{- {1 \over 2} \sum_{l=1}^N \kappa_l^2}
 \prod_{1 \le j < k \le N} {1 - q^{- (\kappa_j - j - \kappa_k + k)} \over 1 -  q^{-( k - j )}}.
 \end{equation}
 \end{proposition}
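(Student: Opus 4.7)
The plan is to reduce the average $\langle s_\kappa \rangle_{(SW)}$ to a ratio of determinants of moments via the Andr\'eief (de~Bruijn) identity, and then exploit the log-normal structure of the Stieltjes--Wigert moments. Writing the squared Vandermonde in (\ref{1.18}) as $(\det[u_k^{j-1}]_{j,k=1}^N)^2$ and substituting the determinantal expression (\ref{2.1}) for $s_\kappa$ cancels one Vandermonde factor. Andr\'eief's identity then collapses the resulting $N$-fold integral into $N!$ times a determinant of moments, yielding
\begin{equation*}
\langle s_\kappa \rangle_{(SW)} = \frac{\det[\mu_{\kappa_j + j + k - 2}]_{j,k=1}^N}{\det[\mu_{j + k - 2}]_{j,k=1}^N}, \qquad \mu_n := \int_0^\infty u^n w^{(SW)}(u;q) \, du,
\end{equation*}
where the denominator comes from the same reduction applied to $C_N^{(SW)}(q)$.

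The second step is the explicit evaluation of $\mu_n$. Substituting $u = e^v$ converts $w^{(SW)}(u;q)\, du$ into a Gaussian in $v$, and completion of the square gives $\mu_n = e^{(n+1)^2/(4k^2)} = q^{-(n+1)^2/2}$ upon using (\ref{1.16a}). The crucial structural feature is that $\log \mu_n$ is quadratic in $n$, so with $\alpha_j := \kappa_j + j - 1$ the numerator entries factor as
\begin{equation*}
\mu_{\alpha_j + k - 1} = q^{-\alpha_j^2/2} \, q^{-k^2/2} \, (q^{-\alpha_j})^k.
\end{equation*}
Pulling the row prefactor $q^{-\alpha_j^2/2}$ and the column prefactor $q^{-k^2/2}$ out of the determinant reduces the problem to $\det[(q^{-\alpha_j})^k]_{j,k=1}^N$, which is a Vandermonde in the variables $q^{-\alpha_j}$ equal to $\prod_j q^{-\alpha_j} \prod_{1 \le j < k \le N}(q^{-\alpha_k} - q^{-\alpha_j})$. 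The denominator determinant factorises identically under the specialisation $\alpha_j \to j-1$.

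It then remains to form the ratio and simplify. The column factor $\prod_k q^{-k^2/2}$ cancels, and the residual $j$-dependent and Vandermonde-leading $q$-powers combine, via the identities $\alpha_j^2 - (j-1)^2 = \kappa_j^2 + 2\kappa_j(j-1)$ and $\alpha_j - (j-1) = \kappa_j$, into a contribution $q^{-\sum_l(\kappa_l^2/2 + \kappa_l l)}$; together with the factor $q^{N|\kappa|}$ on the left-hand side of (\ref{9a}) this is designed to reproduce the prefactor $q^{-\sum_l \kappa_l^2/2}$ on the right, with the linear-in-$\kappa_l$ powers absorbed into the Vandermonde ratio. For the latter, each difference $q^{-\alpha_k} - q^{-\alpha_j}$ is rewritten as a $q$-monomial times a factor $1 - q^{-(\kappa_j - j - \kappa_k + k)}$, and the same rewriting applied in the $\kappa = 0$ case produces the denominators $1 - q^{-(k-j)}$ appearing in (\ref{9a}). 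The main obstacle is entirely this bookkeeping: the conceptual content is merely Andr\'eief plus a Gaussian moment, but verifying that every accumulated $q$-prefactor cancels cleanly to leave exactly the claimed form requires a careful exponent tally through the Vandermonde factorisation.
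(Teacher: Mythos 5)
Your strategy --- Andr\'eief's identity to reduce $\langle s_\kappa\rangle_{(SW)}$ to a ratio of moment determinants, the log-normal evaluation $\mu_n=q^{-(n+1)^2/2}$, and the factorisation of $q^{-(\cdot)^2/2}$ into row, column and Vandermonde pieces --- is exactly the paper's route (the paper merely inserts a preliminary change of variables $u_l\mapsto q^{-N}u_l$ to the reflection-symmetric ensemble $SW^*$, which only relocates the factor $q^{N|\kappa|}$ and sets up the Rogers--Szeg\"o comparison). There is, however, a concrete error in your determinant formula: the exponents $\kappa_j+j+k-2$, i.e.\ the staircase $\alpha_j=\kappa_j+j-1$, are wrong. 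For a partition ordered $\kappa_1\ge\kappa_2\ge\cdots$ the bialternant is $s_\kappa=\det[u_k^{\kappa_j+N-j}]/\det[u_k^{N-j}]$; the exponent $\kappa_j+j-1$ in (\ref{2.1}) is a typo which the paper's own proof silently corrects by writing $\det[u_k^{\kappa_{N-j+1}+j-1}]$ in the Andr\'eief step. With your exponents the numerator determinant is $\det[\mu_{\alpha_j+k-1}]$ with $\alpha_j=\kappa_j+j-1$, and these $\alpha_j$ need not be distinct: already for $N=2$, $\kappa=(1,0)$ one gets $\alpha_1=\alpha_2=1$, two equal rows, and the false conclusion $\langle u_1+u_2\rangle_{(SW)}=0$. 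The same slip propagates to the end: your Vandermonde factors produce $1-q^{-(\kappa_j+j-\kappa_k-k)}$ rather than the $1-q^{-(\kappa_j-j-\kappa_k+k)}$ of (\ref{9a}), and the linear $q$-powers then fail to cancel against $q^{N|\kappa|}$.

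The fix is purely local: replace $\alpha_j$ by $\beta_j=\kappa_j+N-j$ throughout. Then $\mu_{\beta_j+k-1}=q^{-\beta_j^2/2}\,q^{-k^2/2}\,(q^{-\beta_j})^k$, the Vandermonde differences give $\beta_j-\beta_k=\kappa_j-j-\kappa_k+k$ exactly as in (\ref{9a}), and the exponent tally closes: the row and leading-monomial factors contribute $q^{-\frac{1}{2}\sum_l\kappa_l^2-\sum_l(N-l)\kappa_l-|\kappa|}$, the monomials extracted from the Vandermonde ratio contribute $q^{-\sum_l(l-1)\kappa_l}$, and the product is $q^{-\frac{1}{2}\sum_l\kappa_l^2-N|\kappa|}$, which is (\ref{9a}) after multiplying by $q^{N|\kappa|}$. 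With that correction your argument is complete and coincides with the paper's.
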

 
 \begin{proof}
 The change of variables $u_l \mapsto q^{-N} u_l$ shows that
  \begin{equation}\label{9.1}
   \langle     s_\kappa(u_1,\dots,u_N)  \rangle_{(SW)} = q^{-N | \kappa | }    \langle     s_\kappa(u_1,\dots,u_N)  \rangle_{(SW^*)},
 \end{equation} 
 where $SW^*$ is specified  by the PDF
 \begin{equation}\label{14a} 
 {1 \over  C_{N}^{(SW^*)}(q)}  \prod_{l=1}^N u_l^{-N} w^{(SW)}(u_l;q) \prod_{1 \le j < k \le N} (u_j - u_k)^2
 \end{equation} 
 for suitable normalisation $  C_{N}^{(SW^*)}(q)$. The distinguishing feature of (\ref{14a}) is that it is unchanged by the mappings
 $u_l \mapsto 1/ u_l$. We will see later (Remark \ref{R2a}) that working with (\ref{14a}) gives rise to formulas which can be
 identified, up to replacing $q$ by $q^{-2}$, with those appearing in the computation of the average of a Schur
 polynomial with respect to a PDF generalising the eigenvalue PDF for Haar distributed random unitary matrices on the
 unit circle, 
 associated with the Rogers--Szeg\"o orthogonal polynomials.

 Evaluating the denominator in (\ref{2.1}), which is the Vandermonde determinant by the second equality in
 (\ref{1.9}), then using the same equality in the reverse direction, we see from (\ref{1.18}) and
 (\ref{2.1}) that
  \begin{multline}
  \langle     s_\kappa(u_1,\dots,u_N)  \rangle_{(SW^*)}  =  {1 \over C_{N}^{(SW^*)}(q)}  \\
 \times  \int_0^\infty d u_1 \cdots   \int_0^\infty d u_N \, \prod_{l=1}^N u_l^{-N} e^{- k^2 ( \log u_l)^2}
  \det [ u_k^{j-1} ]_{j,k=1}^N    \det [ u_k^{\kappa_{N-j+1} + j - 1}]_{j,k=1}^N .
  \end{multline}
  According to the Andr\'eief identity (see e.g.~\cite{Fo18}) this multiple integral simplifies to the determinant of
  single integrals
   \begin{equation}\label{2.2} 
   \langle     s_\kappa(u_1,\dots,u_N)  \rangle_{(SW^*)}  =  {N! \over C_{N}^{(SW^*)}(q)}
   \det \Big [ \int_0^\infty w^{(SW)}(u;q) u^{j-k+ \kappa_k - 1} \, du \Big ]_{j,k=1}^N.
 \end{equation}    
 
 Each integral in (\ref{2.2}) is a moment of the log-normal weight (\ref{1.19a}), with the well known
 evaluation
  \begin{equation}\label{2.3}  
  \int_0^\infty x^n w^{(SW)}(x;q) \, dx = q^{- (n+1)^2/2}, \qquad n \in \mathbb C.
   \end{equation}  
   Thus
  \begin{equation}\label{2.4}     
    \langle     s_\kappa(u_1,\dots,u_N)  \rangle_{(SW^*)} =   {N! \over C_{N}^{(SW^*)}(q)} 
    \det \Big [ q^{ - {1 \over 2} (j - k + \kappa_k)^2} \Big ]_{j,k=1}^N.
    \end{equation}
    Since
    $$
  q^{ - {1 \over 2} (j - k + \kappa_k)^2} = q^{- {1 \over 2} (j-1)^2 - {1 \over 2} ( \kappa_k - k + 1)^2} v_k^{j-1}, \qquad v_k = q^{- (\kappa_k - k + 1)}
  $$
  the determinant in (\ref{2.4}) can be reduced to the Vandermonde determinant. Evaluating the latter according to the second equality in
  (\ref{1.9}) gives
  $$
    \langle     s_\kappa(u_1,\dots,u_N)  \rangle_{(SW^*)} =   {N! \over C_{N}^{(SW^*)}(q)}  \prod_{l=1}^N q^{ - {1 \over 2} (l-1)^2 - {1 \over 2} (\kappa_l - l + 1)^2}
    \prod_{1 \le j < k \le N} ( q^{-(\kappa_k - k + 1)} -    q^{-(\kappa_j - j + 1)} ).
    $$
    Manipulating the product over $j < k$, and introducing the normalisation by the requirement that the RHS equals unity for $\kappa_l = 0$ ($l=1,\dots,N$)
    gives the stated result.
  \end{proof}
  
  \begin{remark}\label{R2a}
  In a particular notation for the Jacobi theta functions define
  \begin{equation}\label{t3}
  \theta_3(z;q) = \sum_{n=-\infty}^\infty q^{n^2} z^n.
  \end{equation}
  For a suitable normalisation $C_N^{(RS)}(q)$, where the superscript $(RS)$ stands for Rogers-Szeg\"o in keeping with the name of
  the underlying orthogonal polynomials, introduce the PDF on $\theta_l \in [0,2 \pi]$
  $(l=1,\dots,N)$
   \begin{equation}\label{12a}   
   {1 \over C_N^{(RS)}(q)} \prod_{l=1}^N   \theta_3(e^{i \theta_l} ;q)   \prod_{1 \le j < k \le N} | e^{i \theta_k} -  e^{i \theta_j}  |^2.
\end{equation}
It is shown in \cite{On81, AO84}   that
$$
\langle     s_\kappa( e^{i \theta_1}, \dots, e^{i \theta_N})  \rangle_{(RS)} = q^{  \sum_{l=1}^N \kappa_l^2}
 \prod_{1 \le j < k \le N} {1 - q^{2 (\kappa_j - j - \kappa_k + k)} \over 1 -  q^{2( k - j )}}.
 $$
 Comparison with (\ref{9a}) and (\ref{9.1}) shows
  \begin{equation}\label{12b}   
\langle     s_\kappa( e^{i \theta_1}, \dots, e^{i \theta_N})  \rangle_{(RS)} =    
   \langle     s_\kappa(u_1,\dots,u_N)  \rangle_{(SW^*)}  \Big |_{q \mapsto q^{-2}}.
   \end{equation}
   To understand (\ref{12b}), first note that with $z_j = e^{i \theta_j}$
   $$
 \prod_{1 \le j < k \le N} | e^{i \theta_k} -  e^{i \theta_j}  |^2 \, d \theta_1 \cdots     d \theta_N =
 i^{-N} \prod_{l=1}^N z_l^{-N} \prod_{1 \le j < k \le N} (z_k - z_j)^2 \, dz_1 \cdots d z_N,
 $$
 which formally is identical to  the portion of the PDF in (\ref{14a}) excluding $\prod_{l=1}^N w_l^{(SW)}(u_l;q)$.
 This is to be combined with the moment formula 
    \begin{equation}\label{3.11a}  
 \int_{|z| = 1} \theta_3(z;q) z^n \, {dz \over i z} = q^{(n+1)^2}, \qquad n \in \mathbb Z.
\end{equation} 
Since (\ref{3.11a}) is identical in value to (\ref{2.3}), but with $q \mapsto q^{-2}$ in the latter, (\ref{12b}) follows.

 \end{remark}
 
 The utility of Proposition \ref{P2} for the purposes of calculating the power sum average in 
 (\ref{1.22}) stems from the equality (see e.g.~\cite{Ma95})
   \begin{equation}\label{13a+} 
 \sum_{j=1}^N u_j^l = \sum_{r=0}^{{\rm min} \, (l-1,N-1)} (-1)^r s_{(l-r,1^r)} (u_1,\dots, u_N),
 \end{equation}
 where $(l-r,1^r)$ denotes the partition with largest part $\kappa_1 = l  - r$, and $r$ parts $(r \le N - 1)$ equal to
 1. Thus the average of each of the Schur polynomials on the RHS of (\ref{13a+}) can be read off from
 (\ref{9a}), and moreover identified in terms of particular $q$-binomial coefficients (\ref{13}).

   \begin{corollary}
   For $l - r \ge 1$, $r \le N - 1$, we have
     \begin{equation}\label{13b}  
     q^{Nl} \langle s_{(l-r,1^r)} (u_1,\dots,u_N) \rangle_{(SW)} =
     q^{- (l-r)^2/2 - r/2}   \Big [ {N + l - r - 1 \atop l} \Big ]_{q^{-1}}
   \Big [ {l- 1 \atop r} \Big ]_{q^{-1}} .
  \end{equation}     
\end{corollary}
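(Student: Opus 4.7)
The plan is to obtain the corollary by a direct specialization of Proposition \ref{P2} to the hook partition $\kappa = (l-r, 1^r, 0^{N-r-1})$ and then to recognize the resulting rational expression as a product of two $q^{-1}$-binomial coefficients through careful bookkeeping of index ranges.

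First I would compute the two easy pieces. The size of the partition is $|\kappa| = (l-r) + r = l$, matching the $q^{Nl}$ prefactor on the left-hand side, and $\sum_{j=1}^N \kappa_j^2 = (l-r)^2 + r$, which yields exactly the prefactor $q^{-(l-r)^2/2 - r/2}$ on the right-hand side. All the content of the corollary therefore lies in showing
\begin{equation*}
\prod_{1 \le j < k \le N} \frac{1 - q^{-(\kappa_j - j - \kappa_k + k)}}{1 - q^{-(k-j)}} = \Big[{N+l-r-1 \atop l}\Big]_{q^{-1}} \Big[{l-1 \atop r}\Big]_{q^{-1}}.
\end{equation*}

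Next I would split the double product according to which of the three blocks the indices $j$ and $k$ fall into: $\{1\}$, $\{2,\dots,r+1\}$ (the ``$1$''-block), and $\{r+2,\dots,N\}$ (the ``$0$''-block). Pairs entirely inside the $1$-block or entirely inside the $0$-block contribute $1$ since $\kappa_j - \kappa_k = 0$ in each case. The cross contribution between the $1$-block and the $0$-block telescopes: for each $j \in \{2,\dots,r+1\}$,
\begin{equation*}
\prod_{k=r+2}^{N} \frac{1 - q^{-(k-j+1)}}{1 - q^{-(k-j)}} = \frac{1 - q^{-(N-j+1)}}{1 - q^{-(r+2-j)}},
\end{equation*}
and reindexing $i = j-1$ yields $\prod_{i=1}^{r}(1-q^{-(N-i)})/\prod_{i=1}^{r}(1-q^{-i})$.

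The remaining contribution comes from $j=1$ paired with $k=2,\dots,N$. Separating the cases $k\le r+1$ and $k\ge r+2$ and reindexing $m = k-1$ gives two explicit products of ratios, and I would recognize the first of them as $\big[{l-1\atop r}\big]_{q^{-1}}$ immediately. Combining the second one with the cross-block telescoping product, the denominators reassemble into $\prod_{j=1}^{N-1}(1-q^{-j})$, and the numerators into $\prod_{j=l+1}^{l+N-r-1}(1-q^{-j})\cdot\prod_{j=N-r}^{N-1}(1-q^{-j})$; after cancellation this is precisely $\big[{N+l-r-1\atop l}\big]_{q^{-1}}$.

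The main obstacle is purely combinatorial bookkeeping: correctly isolating which index ranges produce telescoping versus identity contributions, and shifting indices so that the surviving products line up with the standard form of $q^{-1}$-binomials from \eqref{13a}--\eqref{13}. Once the partition is dropped into Proposition \ref{P2}, no further tools are needed beyond elementary manipulations of $q$-Pochhammer symbols.
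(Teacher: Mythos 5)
Your proposal is correct and follows exactly the route the paper intends (the corollary is left as an unproved specialisation of Proposition \ref{P2}): substitute the hook partition $(l-r,1^r,0^{N-r-1})$ into (\ref{9a}), note $|\kappa|=l$ and $\sum_j\kappa_j^2=(l-r)^2+r$, and reduce the double product to the two $q^{-1}$-binomial coefficients. Your block decomposition, the telescoping of the cross-block product, and the final reassembly into $\big[{N+l-r-1\atop l}\big]_{q^{-1}}\big[{l-1\atop r}\big]_{q^{-1}}$ all check out.
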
     

Summing (\ref{13b}) over $r$ according to (\ref{13a}) gives a $q$-series which can, after some
manipulation be recognised as a particular terminating ${}_2 \phi_1$ sum, so establishing
Proposition \ref{P1}.

\begin{remark}
1.~In keeping with (\ref{12b}), replacing $q$ by $q^{-2}$ on the RHS of (\ref{13b}) gives the value of
$ \langle s_{(l-r,1^r)} (e^{i \theta_1},\dots, e^{i \theta_N}) \rangle_{(RS)}$ \cite[Eq.~(4.5)]{AO84},
and this same replacement on the RHS of (\ref{1.25}) gives
$\langle \sum_{j=1}^N e^{ i \theta_j l} \rangle_{(RS)}$ \cite[Eq.~(4.6)]{AO84}. \\
2.~It has been remarked that the PDF specifying $(SW^*)$ is unchanged by the mappings
$u_l \mapsto 1/u_l$. Hence
  \begin{equation}\label{13d}
  q^{-N l}  m_{-l}^{(SW_e)}  =    q^{N l}  m_{l}^{(SW_e)},
  \end{equation}
  so extending (\ref{1.25}) to $l$ a negative integer. \\
3.~The little $q$-Jacobi polynomials in (\ref{14c+}), being orthogonal polynomials, satisfy
a 3-term recurrence \cite{KLS10}
\begin{equation}\label{13c}  
- q^{-N} p_l = A_l p_{l+1} - (A_l + A_{-l}) p_l + A_{-l} p_{l-1},
\end{equation}    
valid for $l \ge 1$, with initial conditions $p_0 = 1$, $p_1 = 1 - q^{-N}$, and
where
$$
A_l = {(q^{(l + 1)/2} - q^{- (l + 1)/2}) ( q^{l/2} - q^{-l/2}) \over
 ( q^{l+1/2} - q^{-l - 1/2})  ( q^l - q^{-l})}.
$$
It follows that the moments similarly satisfy a 3-term recurrence. For the GUE (\ref{1.1}), there
is a well known 3-term recurrence for the moments due to Harer and Zagier \cite{HZ86}.
For a derivation of the latter in the context of the Fermi gas interpretation of (\ref{1.1}),
and an extension, see \cite{Fo20}. \\
4.~Orthogonal polynomials from the Askey scheme have occurred in a number of
recent works in random matrix theory; see \cite{CMOS19,FL19a,FL19,ABGS20,GGR20}.
\end{remark}

\subsection{Scaled large $N$ limit}\label{S3.2}
It has already been noted in Corollary \ref{C1} that with $q$ given by (\ref{14c}) the moments as given
by (\ref{1.25}) admit a well defined large $N$ limit. In keeping with Remark \ref{R2a}, upon replacing  $\lambda \mapsto
-  \lambda/2$ this same expression gives the value of the limiting scaled moments for the Rogers-Szeg\"o PDF
(\ref{12a}) \cite{On81},
 \begin{equation}\label{16a}
  \mu_{l,0}^{(RS)} :=  \lim_{N \to \infty}  {1 \over N} \Big \langle \sum_{j=1}^N e^{i \theta_j l} \Big \rangle_{(RS)} =  \mu_{l,0}^{(SW^*)} \Big |_{\lambda \mapsto -  \lambda/2}.
  \end{equation}
 Defining the scaled densities
  \begin{equation}\label{17}
  {\rho}_{(1),0}^{(SW^*)}(x) = \lim_{N \to \infty} {1 \over N} \rho_{(1)}^{(SW^*)}(x) \Big |_{q = e^{-\lambda/N}}, \qquad
   {\rho}_{(1),0}^{(RS)}(\theta) = \lim_{N \to \infty} {1 \over N} \rho_{(1)}^{(RS)}(\theta) \Big |_{q = e^{-\lambda/N}}, 
   \end{equation}
   we have the relations to the scaled moments
  \begin{equation}\label{17a}  
   \mu_{l,0}^{(SW^*)}  = \int_0^\infty x^l    {\rho}_{(1),0}^{(SW^*)}(x) \, dx, \qquad
   \mu_{l,0}^{(RS)}  = \int_0^{2 \pi}  e^{i \theta l}   {\rho}_{(1),0}^{(RS)}(\theta) \, d \theta.
 \end{equation}   
 The second of these can be immediately inverted to give
   \begin{equation}\label{17b}  
 {\rho}_{(1),0}^{(RS)}(\theta) = 1 + {1 \over  \pi} \sum_{l=1}^\infty     \mu_{l,0}^{(RS)}   \cos \theta l.
  \end{equation}  
  Moreover, the explicit form of $\{ \mu_l^{(RS)} \}_{l=1}^\infty $ allows for the sum to be computed,
  with the result \cite{LSS81,On81}
    \begin{equation}\label{17c}  
 {\rho}_{(1),0}^{(RS)}(\theta) = {1 \over \pi \lambda} \log \bigg ( {1 - \cos \theta_c + 2 \cos \theta + 2 \cos( \theta/2)
 \sqrt{ 2 \cos \theta - 2   \cos \theta_c} \over 1 +  \cos \theta_c} \bigg ) \chi_{\cos \theta \ge  \cos \theta_c },
   \end{equation} 
   where $   \cos \theta_c = 2 e^{-\lambda/2} - 1$, and $\chi_A = 1$ for $A$ true, $\chi_A = 0$ otherwise.

There is no literal analogue of (\ref{17b}) for the inversion of the first relation in (\ref{17a}). Instead,
the standard strategy is to introduce the generating function
 \begin{equation}\label{17.1}
 G^{(SW^*)}(y)  = {1 \over y} + {1 \over y} \sum_{l=1}^\infty    \mu_{l,0}^{(SW^*)} y^{-l},
 \end{equation}
 and note from a geometric series expansion that
 \begin{equation}\label{17.1a} 
  G^{(SW^*)}(y)  =  \int_I {   {\rho}_{(1),0}^{(SW^*)}(x)  \over y - x} \, dx
 \end{equation}  
 (here $I$ denotes the interval of support of $ {\rho}_{(1),0}^{(SW^*)}$). The relation
 can be inverted by the Sokhotski-Plemelj formula to give
 \begin{equation}\label{17.1b}   
   {\rho}_{(1),0}^{(SW^*)}(x)  = \lim_{\epsilon \to 0} {1 \over 2 \pi i} \Big (
   G^{(SW^*)}(x - i \epsilon) -   G^{(SW^*)}(x + i \epsilon)  \Big ).
  \end{equation} 
 For this to be of practical use we require the closed form of $ G^{(SW^*)}$.
  
  \begin{proposition}
  We have
   \begin{equation}\label{17.2}
  G^{(SW^*)}(y) = - {1 \over \lambda y} \log \bigg ( {1 + y + \sqrt{(1+y)^2 - 4 y e^\lambda} \over 2 y e^{\lambda}} \bigg ).
    \end{equation} 
    \end{proposition}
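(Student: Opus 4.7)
The plan is to substitute the hypergeometric moment formula from Corollary~\ref{C1} into the definition (\ref{17.1}), interchange summation orders, and integrate the resulting generating series in closed form. Set $z = e^{\lambda}$ and $F_l(z) := {}_2 F_1(-l,l;1;z)$, so (\ref{17.1}) becomes
\begin{equation*}
\lambda y \, G^{(SW^*)}(y) - \lambda = \sum_{l \geq 1} \frac{(-1)^l}{l \, y^l} F_l(z).
\end{equation*}

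First I would establish the generating function identity
\begin{equation*}
\sum_{l=0}^\infty F_l(z) \, t^l = \frac{1 + t + R(t)}{2 \, R(t)}, \qquad R(t) := \sqrt{(1-t)^2 + 4 z t},
\end{equation*}
by applying the Euler transformation $F_l(z) = (1-z) \, {}_2 F_1(l+1, 1-l; 1; z)$ and recognising ${}_2 F_1(l+1, 1-l; 1; z) = P_{l-1}^{(0,1)}(1-2z)$ as a Jacobi polynomial for $l \geq 1$. Substituting $x = 1-2z$ into the standard Jacobi generating function $\sum_{n \geq 0} P_n^{(0,1)}(x) \, t^n = 2/[R(1+t+R)]$ delivers the claim after the factor $1/(1+t+R)$ is rationalised using $(1+t)^2 - R(t)^2 = 4 t (1-z)$.

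Next, dividing by $t$, integrating from $0$ to $-1/y$, and substituting $u = -s$ turns the sum of interest into
\begin{equation*}
\sum_{l \geq 1} \frac{(-1)^l}{l \, y^l} F_l(z) = \int_0^{1/y} \frac{1 - u - w(u)}{2 u \, w(u)} \, du, \qquad w(u) := \sqrt{(1+u)^2 - 4 z u}.
\end{equation*}
The decisive observation is that this integrand has an elementary antiderivative. Using $(1+u)^2 - w(u)^2 = 4 z u$ together with the factorisation $(w + 1 + u - 2z)(1 + u - w) = 2 z (u - 1 + w)$, a direct computation yields $\frac{d}{du} \log(1 + u + w(u)) = (u - 1 + w)/(2 u w)$, which is the negative of the integrand. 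Hence the integral evaluates to $\log 2 - \log(1 + 1/y + w(1/y))$.

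Finally, the reflection symmetry $y \, w(1/y) = w(y)$ is immediate from $(1 + 1/y)^2 - 4 z/y = y^{-2}[(1+y)^2 - 4 z y]$, and it gives $1 + 1/y + w(1/y) = (1 + y + w(y))/y$. Combining with $\lambda = \log z$ produces $\lambda y \, G^{(SW^*)}(y) = \log(2 y z) - \log(1 + y + w(y))$, which is equivalent to (\ref{17.2}). The main obstacle is the first step: without the identification of $\{F_l(z)\}$ as a transformed Jacobi polynomial sequence, there is no clean route to a generating function, and the subsequent guess of an exact antiderivative would be correspondingly unmotivated. Every remaining step is routine.
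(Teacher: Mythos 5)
Your proof is correct and follows essentially the same route as the paper: both reduce the moment sum to a Jacobi-polynomial generating function (the paper identifies ${}_2F_1(-l,l;1;e^\lambda)=P_l^{(0,-1)}(1-2e^\lambda)$ directly, while you reach $P_{l-1}^{(0,1)}$ via an Euler transformation, an equivalent identification), and both recover the factor $1/l$ by an integration whose elementary antiderivative is the logarithm in (\ref{17.2}). The only cosmetic difference is that the paper differentiates $yG^{(SW^*)}(y)$ and then integrates back, verifying the log-derivative identity by direct calculation, whereas you integrate the generating function from the outset and fix the constant at $u=0$.
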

    
    \begin{proof}
    It follows from (\ref{17.1}) and (\ref{16}) that
   \begin{equation}\label{18.1}   
   {d \over d y} \Big ( y G^{(SW^*)}(y) \Big ) = - {1 \over \lambda y} \sum_{l=1}^\infty (-1)^l \, {}_2 F_1(-l,l;1;e^\lambda) y^{-l}.
  \end{equation} 
  Recognising that
  $$
  {}_2 F_1(-l,l;1;e^\lambda) = P_l^{(0,-1)} (1 - 2 e^\lambda),
  $$
  where $ P_n^{(\alpha, \beta)} $ denotes the Jacobi polynomial in usual notation, the sum in (\ref{18.1}) can be computed
  according to the standard generating function for the latter. This gives
  \begin{align}\label{18.1a}   
   {d \over d y} \Big ( y G^{(SW^*)}(y) \Big ) & = - {1 \over 2 \lambda y}  \bigg ( - 1 + { y - 1 \over (1 + 2y  (1 - 2  e^\lambda) + y^2)^{1/2} }\bigg ) \nonumber \\
   & = - {1 \over \lambda} {d \over d y} \log \bigg (
   {1 + y + \sqrt{ (1 + y)^2 - 4 y e^\lambda \over 2 y e^\lambda} }\bigg ),
   \end{align}
  where the second equality can be verified by a direct calculation. The formula (\ref{17.2}) follows.
  \end{proof}
  
  We can apply (\ref{17.1b}) to now deduce the scaled density.
  
  \begin{corollary}
  With $z = 1 - 2 e^\lambda$ set $z_\pm = - z \pm ( z^2 - 1)^{1/2}$. We have
    \begin{equation}\label{20}
    {\rho}_{(1),0}^{(SW^*)}(x) = {1 \over \pi \lambda x} \arctan \bigg ( \sqrt{4 e^\lambda x - (1 + x)^2 \over 1 + x} \bigg )
    \chi_{z_- < x < z_+}.
  \end{equation}   
    \end{corollary}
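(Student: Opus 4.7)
The plan is to apply the Sokhotski--Plemelj inversion \eqref{17.1b} to the closed form resolvent \eqref{17.2}; the density is the discontinuity of $G^{(SW^*)}(y)$ across its branch cut on the real axis, divided by $2\pi i$.

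First I would locate the branch cut. Writing $R(y) := (1+y)^2 - 4 y e^{\lambda} = y^2 + 2 z y + 1$, the branch points of $\sqrt{R(y)}$ are the roots $-z \pm \sqrt{z^2-1} = z_\pm$, and since $R$ has positive leading coefficient one has $R(y) < 0$ precisely on $(z_-, z_+)$. For $\lambda > 0$ one has $z < -1$, hence $z_\pm$ are real and positive (they multiply to $1$), so this candidate support lies in $(0,\infty)$ as required.

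Next I would compute the boundary values. Fixing the principal branch of $\sqrt{R(y)}$ by requiring $\sqrt{R(y)} > 0$ for $y > z_+$ and continuing analytically into the upper and lower half planes yields, for $x \in (z_-, z_+)$,
\[
\sqrt{R(x \pm i 0)} \;=\; \pm i \sqrt{4 x e^{\lambda} - (1+x)^2}.
\]
Writing $a = 1 + x$, $b = \sqrt{4 x e^{\lambda} - (1+x)^2}$ and $c = 2 x e^{\lambda}$, the identity $a^2 + b^2 = 4 x e^{\lambda}$ shows that $\log \lvert (a \pm ib)/c \rvert$ is independent of the sign, so only the argument $\pm \arctan(b/a)$ contributes to the jump. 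Consequently
\[
G^{(SW^*)}(x - i 0) - G^{(SW^*)}(x + i 0) \;=\; \frac{2 i}{\lambda x} \arctan\!\left( \frac{b}{a} \right),
\]
and dividing by $2 \pi i$ per \eqref{17.1b} yields the stated density, with the arctan argument expressed as a single square root in the form of \eqref{20}.

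The principal obstacle is bookkeeping of the square-root branch, in particular verifying the displayed sign by analytically continuing along a small arc in the upper half plane around $z_+$ starting from $y > z_+$. As consistency checks I would confirm that the resulting density is non-negative on $(z_-, z_+)$, that it vanishes at the endpoints where $b \to 0$, and that $\int_{z_-}^{z_+} \rho_{(1),0}^{(SW^*)}(x) \, dx = 1$ agrees with the large-$y$ expansion $G^{(SW^*)}(y) = 1/y + O(1/y^2)$ read off from \eqref{17.2} combined with \eqref{17.1a}.
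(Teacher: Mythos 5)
Your proposal is correct and is essentially the paper's own argument: the paper likewise substitutes the closed form \eqref{17.2} into the Sokhotski--Plemelj formula \eqref{17.1b} and invokes the identity $\arctan z = \tfrac{1}{2i}\log\tfrac{z-i}{z+i}$, which is exactly the jump computation you carry out in detail. Your branch bookkeeping (sign of $\sqrt{R(x\pm i0)}$ fixed by positivity for $y>z_+$, consistent with $G\sim 1/y$) and the consistency checks are sound elaborations of that same one-line proof.
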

    
    \begin{proof}
    After substituting (\ref{17.2}) in (\ref{17.1b}), the result follows from the identity
    $$
    \arctan z = {1 \over 2 i} \log { x - i \over x + i}.
    $$
    \end{proof}

    \begin{remark}
    1.~Upon the change of variables $x \mapsto e^{-\lambda} x$, the functional form (\ref{20}) has
    been derived using a loop equation formalism in \cite{Ma04}. Earlier, this same functional form
    was known from the computation of the density of the scaled zeros of the Stieltjes-Wigert polynomials
    \cite{CL99,KVA99}. \\
    2.~Let $\hat{S}_l(u;q)$ denote the Stieltjes-Wigert polynomials (\ref{1.20}) scaled so that the coefficient
    of $u^l$. The determinant structure implied by (\ref{1.18}) gives
    \cite[special case of Prop.~5.1.4]{Fo10}
    $$
    \Big \langle \prod_{l=1}^N ( x- x_l) \Big \rangle_{(SW)} = \hat{S}_N(x;q).
    $$
    With $q$ given by (\ref{14c}), for large $N$ the LHS can to leading order be expressed in terms of the
    scaled density $  {\rho}_{(1),0}^{(SW)}$ (see e.g.~the discussion in the paragraph below (3.3) of
    \cite{FL15}) to give
  \begin{equation}\label{20.1}    
  \hat{S}_N(x; e^{-\lambda/N}) \sim \exp \bigg ( N \int_I \log (x - y)   {\rho}_{(1),0}^{(SW)}(y) \, dy \bigg ).
  \end{equation} 
  Let us now combine (\ref{20.1}) with knowledge of the fact that $\hat{S}_N$ obeys the second order
  $q$-difference equation (see e.g.~\cite{CK05}) 
   \begin{equation}\label{21}  
   f(xq) - {1 \over x} f(x) + {1 \over x} f(x/q) = q^N f(x).
   \end{equation}  
  Dividing both sides of (\ref{21}) by $f(x)$, substituting (\ref{20.1}) and expanding
  $q = (1 - \lambda/N + O(1/N^2))$ shows
   \begin{equation}\label{22} 
   e^{-u} +  e^{u -y}= e^{-\lambda} + e^{-y}, \qquad e^y := x, \: \: e^u := \exp \bigg ( \lambda x \int_I {  {\rho}_{(1),0}^{(SW)}(t)
   \over x - t} \, dt \bigg ).
    \end{equation} 
    With a different derivation (and slightly varying notation), this functional equation has been derived previously \cite{HY09,Ma04}.  \\
    3.~As commented below Proposition \ref{P1}, the LHS of (\ref{1.25}) can be expanded in a series in $1/N$. Thus, extending the
   notation for the leading term in (\ref{16}), we have that for large $N$
    \begin{equation}\label{22a}
   {q^{N l} \over N} m_l^{(SW_e)} \Big |_{q = e^{-\lambda/N}} = \mu_{l,0}^{(SW^*)} + {1 \over N^2}  \mu_{l,2}^{(SW^*)} +  {1 \over N^4}  \mu_{l,4}^{(SW^*)} + \cdots.
     \end{equation} 
   For the Rogers-Szeg\"o moments as specified by the average in (\ref{16a}), the analogous expansion up to this order has
   been calculated in \cite{On81}. The theory of Remark \ref{R2a} tells us that replacing $\lambda$ by $-2 \lambda$ in the latter gives
   the terms in (\ref{22a}). In particular, we read off from \cite{On81} that
    \begin{equation}\label{22b}
    \mu_{l,2}^{(SW^*)} = (-1)^l {\lambda^2 \over 24} \sum_{p=1}^l {e^{\lambda p} - 1 \over \lambda p} \binom{l}{p} \binom{l+p-1}{l}
    \Big ( (2p-1) l^2 - 2 p^2 \Big ). 
\end{equation}     
     \end{remark}
    
\section{Special functions and the limiting correlation kernel}   
\subsection{Stieltjes--Wigert correlation kernel}\label{S4.1}
 The PDF (\ref{1.18}) specifying the ensemble $(SW)$ is of a standard form familiar in the study
 of  unitary invariant random matrices; see \cite[Ch.~5]{Fo10}. In particular, the $k$-point correlation
 function $\rho_{(k)}$, defined as $N!/(N-k)!$ times the integral over all but the first $k$ co-ordinates
 is therefore given as a determinant
  \begin{equation}\label{23}  
\rho_{(k)}^{(SW)}(u_1,\dots,u_k) = \det \Big [   K_N^{(SW)}(u_j,u_l) \Big ]_{j,l=1}^k,
 \end{equation} 
 where $K_N^{(SW)}(x,y)$ is referred to as the correlation kernel. The latter is given in terms of the orthonormal polynomials
 associated with the weight function $w^{(SW)}(u)$, i.e.~the Stieltjes--Wigert polynomials (\ref{1.20}), by the sum
  \begin{align}\label{24}  
 K_N^{(SW)}(u,v) & =   \Big ( w^{(SW)}(u) w^{(SW)}(v)   \Big )^{1/2} \sum_{j=0}^{N-1} S_j(u;q)  S_j(v;q) \nonumber \\
 & =     \Big ( w^{(SW)}(u) w^{(SW)}(v)   \Big )^{1/2}  {C_N \over C_{N-1}} 
 {S_N(u;q) S_{N-1}(v;q) - S_{N-1}(u;q) S_{N}(v;q)   \over   u  - v },
 \end{align}
 where the second line follows by the Christoffel--Darboux summation formula (see e.g.~\cite[Prop.~5.1.3]{Fo10}). In this
 formula $C_N$ is the coefficient of $u^N$ in $S_N(u;q)$, which we read off from (\ref{1.20}) to be given by
  \begin{equation}\label{24a} 
  C_N = {q^{N^2 + N + 1/4} \over ( (1 - q) \cdots (1 - q^N))^{1/2}}.
   \end{equation}
   The change of variables (\ref{1.16b}) shows that the correlation kernel for the ensemble $(SW_e)$ is related to (\ref{24})
   by 
 \begin{equation}\label{24b}   
  K_N^{(SW_e)}(x,y) = {2 \pi \over L} \Big ( u v \, w(u;q) w(v;q) \Big )^{1/2}  K_N^{(SW)}(u,v) \Big |_{u =  q^{-N} e^{2 \pi  x/ L } \atop
  v =  q^{-N} e^{2 \pi  y /L }  }.
 \end{equation}  
 
 \subsection{Scaling limits}\label{S4.2}
 With $L$ fixed and $N$ large, the interpretation of (\ref{1.15}) in terms of the Boltzmann factor (\ref{p1}) gives the prediction \cite{Fo94x}
 that to leading
 order the density will be supported on the interval $[-\pi  N/Lc, \pi N /L c]$ and is on average uniform. Hence there are two distinct scaling limits.
 One is when the particles are located in the interior of the support and away from the edges. To accomplish this we choose to locate
 the particles in the neighbourhood of the origin. The other is when the particles are located a finite distance from one of the edges. We consider
 each separately.
 
 \subsubsection{Bulk scaling}
 Define
 \begin{align} 
 \ell (y;q) & = e^{\xi \pi y / 2L} q^{-1/16} \sum_{\nu=-\infty}^\infty (-1)^\nu q^{(\nu + 1/4 - c L y / 2 \pi)^2}  \nonumber \\
 & = e^{(\xi + 1) \pi y / 2L}  e^{- c y^2/2} \theta_3 ( -q^{1/2} e^{\pi y / L}; q),  \label{L1a}  \\
\hat{\ell}(y;\hat{q}) & =  e^{-\xi \pi y / 2L}  \theta_1\Big ( \pi \Big ( {\pi \over 4} + {c L y \over 2 \pi} \Big ) \Big |     \hat{q}    \Big ), \label{L1b}
\end{align}
 where in (\ref{L1b}) we have adopted the particular notation for the Jacobi theta function
 \begin{equation}\label{L1c}
 \theta_1(x|q) := -i \sum_{n=-\infty}^\infty (-1)^n q^{(n+1/2)^2} e^{2  i (n + 1/2)x}
  \end{equation}
  (note that this convention differs from that used in the second line of (\ref{L1a}) as specified by (\ref{t3})).   
  The bulk scaling limit of the kernel (\ref{24b}) has been established in terms of these functions in
  \cite{Fo94x}.
  
  \begin{proposition}\label{p4.1}
  Let $\xi = 1$ for $N$ even, and $\xi = -1$ for $N$ odd. Also define
   \begin{equation}\label{24c}   
 \hat{q} =    e^{- c L^2/2} .
    \end{equation}
    We have
 \begin{align} \label{4.10}
& K^{(SW_e)}_{\rm bulk}(x,y) := \lim_{N \to \infty} K^{(SW_e)}(x,y) \nonumber \\
&  \qquad     = \Big ( {c \over \pi} \Big )^{1/2} {1 \over (q;q)_\infty^3}
{\ell(x;q) \ell(-y;q) - \ell (y;q) \ell(-x;q) \over 2 \sinh (\pi (x - y)/L) }  \nonumber  \\
& \qquad = {1 \over L} {1 \over \theta_1'(0|\hat{q}^2)}
{\hat{\ell}(x;\hat{q}) \hat{\ell}(-y;\hat{q}) - \hat{\ell}(y;\hat{q}) \hat{\ell}(-x;\hat{q}) \over 2 \sinh (\pi (x - y)/L) } .
\end{align}
 \end{proposition}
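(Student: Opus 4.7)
The plan is to start from the Christoffel--Darboux representation (\ref{24}) of $K_N^{(SW)}$, pass to the $(SW_e)$ kernel using the change of variables (\ref{24b}), and then carry out a termwise asymptotic analysis of the defining series (\ref{1.20}) for $S_N$. The key task is to determine the large-$N$ behaviour of $S_N(q^{-N} e^{2 \pi y/L}; q)$ for fixed $y$, $L$, $c$; the value of $S_{N-1}$ at the same argument then follows by a trivial index shift.

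Substituting $u = q^{-N} e^{2 \pi y/L}$ into (\ref{1.20}), the generic term of the partial sum reduces to
\begin{equation*}
\binom{N}{\nu}_q (-1)^\nu q^{\nu^2 - N \nu + \nu/2}\, e^{2 \pi \nu y/L}.
\end{equation*}
Completing the square in $\nu$ and shifting the summation index by approximately $N/2$ (with the half-integer offset for $N$ odd producing the parity factor $\xi$) converts the partial sum into a bilateral sum in the shifted index $\mu \in \mathbb{Z}$. Since $|q|<1$, for each fixed $\mu$ one has $\binom{N}{N/2+\mu}_q \to 1/(q;q)_\infty$, and the Gaussian factor $q^{\mu^2}$ provides a summable dominating bound that justifies interchanging limit and summation. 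The resulting bilateral series is precisely $\theta_3(-q^{1/2} e^{2 \pi y/L}; q)$, while the Gaussian and linear exponential prefactors peeled off during the completion of the square reassemble as $e^{-c y^2/2} e^{(\xi+1) \pi y/(2L)}$, so that $S_N(q^{-N} e^{2 \pi y/L}; q)$ is asymptotic to an explicit $N$-dependent scalar multiple of $\ell(y;q)/(q;q)_\infty$.

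With this asymptotic in hand, the first displayed form in (\ref{4.10}) is obtained by assembling the leading-coefficient ratio $C_N/C_{N-1}$ from (\ref{24a}), the square-root weights $(w^{(SW)}(u) w^{(SW)}(v))^{1/2}$, the Jacobian factor $(2 \pi /L)(uv)^{1/2}$ from (\ref{24b}), and the denominator $u-v = 2 q^{-N} e^{\pi(x+y)/L} \sinh(\pi (x-y)/L)$. All the exponentially growing $N$-dependent factors cancel across these pieces: the Gaussians $e^{-cx^2/2}$ and $e^{-cy^2/2}$ in the weights are the ones already sitting inside $\ell(x;q)$ and $\ell(y;q)$; the $q^{-N}$ in the Vandermonde factor $u-v$ cancels matching powers of $q^{-N}$ extracted from $(uv)^{1/2}$ and from the leading coefficients; and the constant $(c/\pi)^{1/2}/(q;q)_\infty^3$ emerges from combining $k/\sqrt{\pi}$ in $w^{(SW)}$ with one power of $(q;q)_\infty^{-1}$ each from $S_N$, $S_{N-1}$, and $C_N/C_{N-1}$.

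The equivalent second form in (\ref{4.10}) then follows from the Jacobi imaginary transformation of the theta function. The definition (\ref{24c}) is arranged so that $q$ and $\hat q^2$ are modular conjugates in the theta-function sense, and the transformation rewrites $\theta_3(\,\cdot\,;q)$ as a multiple of $\theta_1(\,\cdot\,|\hat q^2)$ with a rescaled argument; combined with the triple-product identity for $(q;q)_\infty^3$ in terms of $\theta_1'(0|\hat q^2)$, up to absorbed powers of $\hat q$, this converts the first form of (\ref{4.10}) into the second. The main obstacle is not any single analytic step but the careful bookkeeping of Gaussian and exponential prefactors together with the parity shift for odd $N$; the dominated-convergence step and the modular identities themselves are routine once the algebra is organised correctly.
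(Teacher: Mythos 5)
Your general framework (Christoffel--Darboux form, the change of variables (\ref{24b}), completion of the square in the index $\nu$ of (\ref{1.20}), and dominated convergence of the resulting bilateral sum to a theta function) is the right one, but there is a genuine gap at the decisive step, and as written your argument produces $0$ rather than the stated kernel. The issue is your claim that $S_{N-1}$ at the same argument ``follows by a trivial index shift.'' Carry out your own computation for both polynomials at $u=q^{-N}e^{2\pi y/L}$: the generic term of $S_l(q^{-N}e^{2\pi y/L};q)$ has $q$-exponent $\nu^2-(N-\tfrac12)\nu$ for \emph{both} $l=N$ and $l=N-1$ (the index $l$ enters only through the upper summation limit and the $q$-binomial, both of which wash out in the limit), so after the shift $\nu=N/2+\mu$ both partial sums converge, up to $y$-independent prefactors, to the \emph{same} function $\theta_3(-q^{1/2}e^{2\pi y/L};q)$ of $y$. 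Substituting two proportional leading asymptotics into the antisymmetric numerator $S_N(u)S_{N-1}(v)-S_{N-1}(u)S_N(v)$ gives identically zero at leading order. This is consistent with the form of the answer: (\ref{4.10}) pairs $\ell(x)$ with $\ell(-y)$, i.e.\ with a genuinely different second function, and that second function cannot come from the leading term.

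The missing ingredient is the next order. Writing $T_l$ for the partial sum in $S_l(q^{-N}z;q)$ and using the $q$-Pascal identity $\binom{N}{\nu}_q-\binom{N-1}{\nu}_q=q^{N-\nu}\binom{N-1}{\nu-1}_q$, one finds $T_N-T_{N-1}$ is smaller than $T_N$ by a factor $O(q^{N/2})$ and, crucially, its bilateral-sum limit is $\theta_3(-q^{-1/2}z;q)=\theta_3(-q^{1/2}z^{-1};q)$, i.e.\ the reflected argument $y\mapsto -y$. Rewriting the numerator as $-T_N(z_x)\Delta(z_y)+\Delta(z_x)T_N(z_y)$ with $\Delta=T_N-T_{N-1}$ then produces the structure $\ell(x)\ell(-y)-\ell(y)\ell(-x)$, and the extra $q^{N/2}$ is absorbed by the other $N$-dependent prefactors you list. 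So your bookkeeping of ``exponentially growing prefactors'' must be supplemented by this exact leading-order cancellation; without it the limit cannot be reached. Note that the paper itself does not prove Proposition \ref{p4.1} --- it quotes the result from \cite{Fo94x} and, in a later remark, indicates that the bulk kernel can be recovered by translating the edge kernel to infinity; the same two-term phenomenon is visible there in the expansion (\ref{1.20x}), where the answer comes from the cross terms between $A_q(q^{1/2}z)$ and the exponentially small $q^{1+N}A_q(q^{-1/2}z)$ correction, not from the leading terms alone.
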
   
 
 \begin{remark}
 1.~The first of the equalities in (\ref{4.10}), in the case $N$ odd, has also been obtained in \cite{TK14}.
 Moreover, asymptotic estimates from this latter reference show that rate of convergence to the limit
 has a correction term proportional to $q^N$, and is thus exponentially small. \\
 2.~The bulk density is obtained by taking the limit $y \to x$ in (\ref{4.10}). The resulting expression is a non-trivial periodic
 function of period $2 \pi/cL$. Indeed a crystalline phase is a characteristic of the many body state resulting from the $|x|$ potential
 in one-dimension \cite{BL75}. \\
 3.~It follows from the final equality in (\ref{4.10}) and (\ref{L1b}) and (\ref{L1c}) that
   \begin{equation}\label{25}    
   \lim_{L \to \infty} {2 \pi \over c L} K_{\rm bulk}^{(SW_e)}\Big ( {2 \pi \over c L} x, {2 \pi \over c L} y \Big ) = {\sin \pi (x - y) \over \pi (x - y)},
   \end{equation}
   (this result is also deduced in \cite{TK14} by a somewhat complicated calculation using working based on the first of the evaluations in (\ref{4.10})). The kernel
   in (\ref{25}) is well known as specifying the bulk scaling limit of Hermitian random matrices; see e.g.~\cite{EY17}.
 \end{remark}
 
 \subsubsection{Edge scaling}
 Writing $x = - \pi N / L c + X$ as appropriate for the analysis of the (left) edge scaling (recall the discussion at the beginning of
 \S \ref{S4.2}), we see from (\ref{24b}) and (\ref{24}) that relevant is the large $N$ form of $S_N(z;q)$ with $z,q$ fixed. In relation to 
 this, it is easy to check from the definition (\ref{1.20}) that
\begin{equation}\label{1.20x}  
S_N(z;q) = {(q;q)_N^{1/2} \over (q;q)_\infty} (-1)^N q^{N/2 + 1/4} \Big (
A_q(q^{1/2} z) - {q^{1 + N} \over 1 - q}  A_q(q^{-1/2} z)  + O(q^{2 N}) \Big ),
\end{equation}
where $A_q(y)$ is specified by (\ref{1.20y}).
In fact the limit implied by the leading term is already in Wigert's original paper \cite{Wi23}.

From the expansion (\ref{1.20x}) the scaling limit of (\ref{24b}) is immediate (this limit formula was presented
in \cite{Fo94x} but with (\ref{1.20x}) only implicit).

\begin{proposition}
We have
\begin{align} \label{4.10}
& K^{(SW_e)}_{\rm edge}(X,Y) := \lim_{N \to \infty} K^{(SW_e)}(x,y) \Big |_{x = - \pi N / L c + X \atop y = - \pi N / L c + Y} 
   \nonumber \\
  & \quad  = \Big ( {c \over \pi} \Big )^{1/2} {1 \over (q;q)_\infty} {e^{-c (X^2 + Y^2)/2} \over  \sinh (\pi (X - Y)/L) } 
 \nonumber  \\
& \qquad  \times \Big ( A_q(q^{1/2} e^{2 \pi X/L};q)    A_q(q^{-1/2} e^{2 \pi Y/L};q) -
 A_q(q^{1/2} e^{2 \pi Y/L};q)    A_q(q^{-1/2} e^{2 \pi X/L};q) \Big ).
\end{align}
\end{proposition}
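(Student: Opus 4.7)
The plan is to substitute the edge kernel formula (4.10) into the limit (1.16x) and reduce everything to a known asymptotic of $A_q$ at $q \to 1^-$. From (1.16a), writing $q = e^{-\epsilon}$ with $\epsilon = 2\pi^2/(cL^2)$, the limit $L \to \infty$ is equivalent to $\epsilon \to 0^+$. The prescribed scaling (1.16r) is engineered so that
\begin{equation*}
e^{2\pi X(x,L)/L} = \tfrac{1}{4} e^{-\epsilon^{2/3} x}, \qquad q^{\pm 1/2} e^{2\pi X(x,L)/L} = \tfrac{1}{4} \exp\bigl(\mp\tfrac{\epsilon}{2} - \epsilon^{2/3} x\bigr),
\end{equation*}
and analogously at $Y(y,L)$. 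Thus the four arguments of $A_q$ appearing in (4.10) sit within an $O(\epsilon^{2/3})$ window of the critical value $z = \tfrac{1}{4}$, which is precisely the location at which the Hayman--Prellberg asymptotic from \cite{HP15,Ha17} produces Airy-type behaviour of the Ramanujan function.

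Next I would invoke that expansion, which schematically has the form $A_q(\tfrac{1}{4} e^{-\epsilon^{2/3} x - \delta}) = P(\epsilon)\bigl[\mathrm{Ai}(x + c_1 \delta\, \epsilon^{-2/3}) + o(1)\bigr]$ for some explicit $\epsilon$-only prefactor $P(\epsilon)$ and constant $c_1$, uniformly for $x$ in compact sets and $\delta = O(\epsilon)$. Taking $\delta = \pm \epsilon/2$ and Taylor-expanding the Airy function gives
\begin{equation*}
A_q\bigl(q^{\pm 1/2} e^{2\pi X/L}\bigr) = P(\epsilon)\bigl[\mathrm{Ai}(x) \pm \tfrac{c_1}{2}\epsilon^{1/3} \mathrm{Ai}'(x)\bigr] + o(\epsilon^{1/3}),
\end{equation*}
and analogously at $Y,y$. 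In the antisymmetric combination appearing in (4.10) the leading $\mathrm{Ai}(x)\mathrm{Ai}(y)$ terms cancel, and a short algebraic manipulation yields $-c_1\, \epsilon^{1/3} P(\epsilon)^2 [\mathrm{Ai}(x)\mathrm{Ai}'(y) - \mathrm{Ai}(y)\mathrm{Ai}'(x)]$. Combined with $\sinh(\pi(X-Y)/L) = -\tfrac{1}{2}\epsilon^{2/3}(x-y) + O(\epsilon^2)$, the Airy Wronskian divided by $(x-y)$ emerges, multiplied by $2c_1\, \epsilon^{-1/3} P(\epsilon)^2$ (with the overall sign absorbed by the leading minus in (1.16x) together with $X-Y \propto -(x-y)$).

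Finally I would track the remaining prefactors. The explicit $\tfrac{L}{2\pi}\epsilon^{2/3}$ in (1.16x) combines with the emergent $\epsilon^{-1/3}$ and the $(c/\pi)^{1/2}$ in (4.10) to produce a purely algebraic constant after substituting $\epsilon = 2\pi^2/(cL^2)$. The serious obstacle is the cancellation of the three exponentially large/small factors: the Gaussian $e^{-c(X^2+Y^2)/2} = \exp(-(\log 4)^2/(2\epsilon) + O(1))$ decays, the reciprocal Pochhammer $(q;q)_\infty^{-1} \sim \sqrt{\epsilon/(2\pi)}\, e^{\pi^2/(6\epsilon)}$ (from the modular transformation of the Dedekind eta) grows, and $P(\epsilon)^2$ from Hayman--Prellberg contributes its own exponential in $1/\epsilon$. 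The content of \cite{HP15,Ha17} is precisely that these three exponentials are engineered to cancel; verifying this cancellation and collecting the residual algebraic constants to produce exactly $K^{(G)}_{\rm edge}(x,y)$ is the error-prone core of the calculation, whereas the Taylor-expansion/Wronskian step is routine once the asymptotic of $A_q$ is in hand.
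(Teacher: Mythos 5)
Your proposal proves the wrong statement. The proposition at hand is the $N\to\infty$ \emph{edge scaling limit at fixed $L$, $c$ and $q$}: under the shift $x=-\pi N/Lc+X$, the finite-$N$ Christoffel--Darboux kernel converges to the displayed expression built from $A_q$. What you have outlined instead is the subsequent $L\to\infty$ (i.e.\ $q\to 1^-$) limit of that expression under the scaling (\ref{16r}), which is Proposition \ref{P1d} --- indeed your opening sentence says you will ``substitute the edge kernel formula into the limit (\ref{16x})'', so you are taking the present proposition as an input rather than proving it. Nothing in your argument touches the large-$N$ asymptotics of the Stieltjes--Wigert polynomials at fixed $q$, which is where the entire content of this proposition lies; the Hayman--Prellberg Airy-type asymptotics of $A_q$ that you invoke play no role here because $q$ is not tending to $1$.

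The proof the paper gives is short. Under the change of variables (\ref{1.16b}), $u=q^{-N}e^{2\pi x/L}$, the edge shift gives $e^{2\pi x/L}=q^{N}e^{2\pi X/L}$ (since $q=e^{-2\pi^2/(cL^2)}$), so $u=e^{2\pi X/L}$ is \emph{fixed} as $N\to\infty$. One then needs the large-$N$ behaviour of $S_N(z;q)$ at fixed $z$ and $q$, which is read off directly from the series (\ref{1.20}):
\begin{equation*}
S_N(z;q) = {(q;q)_N^{1/2} \over (q;q)_\infty}\,(-1)^N q^{N/2+1/4}\Big( A_q(q^{1/2}z) - {q^{1+N}\over 1-q}\,A_q(q^{-1/2}z) + O(q^{2N}) \Big).
\end{equation*}
Substituting this (and its $N\mapsto N-1$ analogue) into the Christoffel--Darboux form (\ref{24}), (\ref{24b}), the leading $A_q(q^{1/2}u)A_q(q^{1/2}v)$ products cancel in the antisymmetric numerator, the surviving cross terms of order $q^{N}$ balance the remaining $N$-dependent prefactors (coming from $C_N/C_{N-1}$ and from the weight evaluated at the shifted argument), and the stated limit follows, with convergence exponentially fast in $N$. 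If your calculation belongs anywhere, it is essentially the paper's proof of Proposition \ref{P1d} in Section \ref{S4.3}, not of this proposition.
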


\begin{remark}
 1.~The order of the remainder in (\ref{1.20x}) implies that the convergence to the limit happens at a rate
 proportional to $q^N$ and is thus exponentially fast. \\
 2.~It is noted in  \cite{Fo94x} that the first evaluation in Proposition \ref{p4.1} can be reclaimed
for $N$ even (odd)  from (\ref{4.10}) by making the replacements $X \mapsto M + x$ ($M - 1/2 + x$),
$Y \mapsto M + y$ ($M - 1/2 + y$) and taking the limit $M \to \infty$. \\
3.~The edge scaling limit of the Christoffel-Darboux kernel for both the $q$-Hermite 
and $q$-Laguerre orthogonal polynomial
systems has been shown in \cite{Is05} to have explicit forms also involving the function
$A_q(z)$, but which are distinct from each other, and distinct from (\ref{4.10}). \\
4.~According to (\ref{24b}) and (\ref{24}), a direct analysis of the right edge scaling limit requires
the large $N$ form of $S_N(q^{-2N} z; q)$ with $z, q$ fixed. This expansion can be found
in \cite{Is05}, which is obtained from the series (\ref{1.20}) by first replacing $\nu$ by $l - \nu$. As
commented in \cite{Wo18}, the latter replacement implies the symmetry
$$
S_n (z;q) = ( - z q^n)^n S_n \Big ( {1 \over z q^{2n}}; q \Big ).
$$
This symmetry used in (\ref{24b}) and (\ref{24}) maps the right edge to the left edge, showing
both are equivalent, as can be anticipated from (\ref{1.15}).
 \end{remark}
 
 Associated with the (left) edge scaled kernel $K^{(SW_e)}_{\rm edge}$ is the gap probability
 \begin{equation}\label{m1}  
 E^{(SW_e)}_{\rm edge}(0;(-\infty,s)) = \det \Big ( \mathbb I - \mathbb K^{(SW_e)}_{\rm edge} \Big |_{(-\infty,s)} \Big ).
 \end{equation}
 Here $\mathbb  K^{(SW_e)}_{\rm edge}  \Big |_{(-\infty,s)} $ denotes the integral operator on $(-\infty,s)$ with kernel
 $K^{(SW_e)}_{\rm edge} (X,Y)$.
 The gap probability in turn  determines the probability density function of the scaled position
 of the leftmost particle, $p_{\rm left}^{(SW_e)}(s)$ say,
 by a simple differentiation
  \begin{equation}\label{m2} 
 p_{\rm left}^{(SW_e)}(s)  = -   {d \over ds}    E^{(SW_e)}_{\rm edge}(0;(-\infty,s)).
  \end{equation}
  
  The gap probability admits an expansion in terms of the edge correlations, obtained by substituting $K^{(SW_e)}_{\rm edge} $ in
  (\ref{23}) (see e.g.~\cite[Eq.~(9.4)]{Fo10})
  \begin{multline}\label{m3}
   E^{(SW_e)}_{\rm edge}(0;(-\infty,s))  \\
   = 1 - \int_{-\infty}^s  K^{(SW_e)}_{\rm edge} (x,x)  \, dx +  {1 \over 2}  \int_{-\infty}^s dx_1   \int_{-\infty}^s dx_2 \,
   \det \begin{bmatrix} K^{(SW_e)}_{\rm edge} (x_1,x_1) &  K^{(SW_e)}_{\rm edge} (x_1,x_2)  \\
  K^{(SW_e)}_{\rm edge} (x_2,x_1)  &  K^{(SW_e)}_{\rm edge} (x_2,x_2)  \end{bmatrix} - \cdots
 \end{multline}
 It follows from (\ref{m2}) and (\ref{m3}) that
   \begin{equation}\label{m4} 
 p_{\rm left}^{(SW_e)}(s)  \mathop{\sim}\limits_{s \to - \infty}     K^{(SW_e)}_{\rm edge} (s,s) ,
\end{equation}
and so recalling (\ref{4.10}) and (\ref{1.20y}) we have that $ p_{\rm left}^{(SW_e)}(s)$ exhibits a
leading order Gaussian decay $e^{-c s^2}$ in its left tail. This is in agreement with the
prediction from \cite[Eq.~(8)]{DKMSS17} for the classical one-dimensional Coulomb gas
in a confining harmonic potential of strength $1/4$ (therefore, upon comparing with
(\ref{2.1}), we must set $c = 1/2$ for the results of \cite{DKMSS17}). This same reference
also predicts the leading behaviour in the right tail 
 \begin{equation}\label{m5} 
 p_{\rm left}^{(SW_e)}(s)  \mathop{\sim}\limits_{s \to  \infty}  \exp \Big ( - s^3/24 \alpha + O(s^2) \Big ), \qquad \alpha =  \pi / L.
\end{equation}
While we know of no direct way to establish this result from (\ref{m1}), by modifying the statistical
mechanics model   (\ref{2.1}) to its natural two-dimensional extension 
(recall  the discussion of \S \ref{S2.1}), an analytic derivation of (\ref{m5}) is possible; see the Appendix.
 
 \subsection{Scaling limit of $A_q(z)$ and the Airy kernel}\label{S4.3}
 It is well known that for the PDF (\ref{1.1}) the Christoffel--Darboux kernel $K^{(G)}(x,y)$ has a bulk scaling
 limit equal to the RHS of (\ref{25}); see e.g.~\cite[Ch.~7]{Fo10}. We know from (\ref{25}) that an
 appropriate $L \to \infty$ scaling of $K_{\rm bulk}^{(SW_e)}$ reclaims this functional form.
 To leading order the edges of the spectrum for (\ref{1.1}) interpreted as an
 eigenvalue PDF are
 at $\pm \sqrt{2N}$, and $K^{(G)}$ admits the edge scaling limit (\ref{SA}).
 The question to be addressed is to identify a scaling of
 $X,Y$ such that for  $L \to \infty$ the kernel $ K^{(SW_e)}_{\rm edge}(X,Y)$ reduces to
 the Airy kernel (\ref{SA}). This is answered in Proposition \ref{P1d}.
 For the proof, appropriate asymptotic properties of the special function $A_q(z)$
(\ref{1.20y}) are required. 

Before introducing these asymptotic properties, which fortunately are available in the
literature \cite{LW13,HP15,Ha17}, some contextual information relating to $A_q(z)$ is
appropriate. Firstly, names associated with $A_q(z)$ are the Ramanujan function, and the
$q$-Airy function; see \cite{Is05} for the underlying reasons. In relation to the latter, it
is important to be aware that there are other candidates which qualify for the title
of $q$-Airy functions, see in particular \cite{HKW06} for special function solutions of the
$q$-Painlev\'e II equation which are shown to limit to the Airy function. Actually these
various candidates can be related \cite{Mo14}; see also \cite{IZ07}. We remark too that
$A_q(z)$ satisfies the functional equation ($q$-difference equation)
$$
q x u (q^2 x) - u(qx) + u(x) = 0,
$$
and can also be regarded as a degeneration of the basic hypergeometric function (\ref{14a+}),
being given by
$$
A_q(x) =   {}_0 \phi_1 \Big ( {\underline{\: \:} \atop 0} \Big | q; - {qx } \Big ), \qquad
  {}_0 \phi_1 \Big ( {  \underline{\: \:} \atop b} \Big | q;z \Big )   := \sum_{n=0}^\infty {q^{n(n-1)} \over (q;q)_n (b;q)_n} z^n.
  $$
  
\subsubsection{Proof of Proposition \ref{P1d}} 
The key to establishing (\ref{16x}) is an asymptotic formula contained in \cite{Ha17}.
Specifically, with
\begin{equation}\label{4.20}  
q = e^{- \epsilon}, \qquad \alpha = 1 - 4z, \qquad \beta = {\log (z)^2 \over 4} + {\pi^2 \over 12},
\end{equation}
we have from \cite[Theorem 4.7.1, after simplification of (4.55)]{Ha17} (see also \cite{HP15}, and
compare the leading term with \cite[Th.~2]{LW13}), that for $\epsilon \to 0^+$, with $\alpha/ \epsilon^{2/3}$
fixed
\begin{equation}\label{4.21}  
A_q(z) = {1 \over 2} (q;q)_\infty e^{\beta/\epsilon} \Big (
{\rm Ai} \Big ( {\alpha \over \epsilon^{2/3}} \Big )  \epsilon^{1/3}  - {\rm Ai}' \Big ( {\alpha \over \epsilon^{2/3}} \Big )  \epsilon^{2/3} \Big ) \Big (
1 + O( \epsilon) \Big ).
\end{equation}

To see the relevance of (\ref{4.21}) in relation to (\ref{4.10}), note that with $X(x,l)$ as in (\ref{16r})
$$
e^{(2 \pi / L) X(x,L)} = {1 \over 4} (1 - \epsilon^{2/3} x) + O(\epsilon^{4/3}),
$$
and so considering $A_q( e^{(2 \pi / L) X(x,L)})$ leads to (\ref{4.21}) with $\alpha=\epsilon^{2/3} x$.
A (minor) detail is that (\ref{4.10}) requires not $A_q( e^{(2 \pi / L) X(x,L)})$ but rather
$A_q(q^{\pm 1/2} e^{(2 \pi / L) X(x,L)})$. By a first order Taylor expansion, this changes the prefactor
of ${\rm Ai}'$ in (\ref{4.21}) from $-1$ to $- {1 \over 2}$ and $- {3 \over 2}$ respectively. Noting this,
substituting in (\ref{4.10}), and recalling too the standard $\epsilon \to 0^+$ asymptotic formula
$$
\log (q;q)_\infty = - {\pi^2 \over 6 \epsilon} + \log \sqrt{2 \pi \over \epsilon} + O(\epsilon),
$$
which follows from the functional equation for the Dedekind eta function \cite{WW65}, we find after some minor simplification that (\ref{16x}) results.

     \subsection*{Acknowledgements}
	This research is part of the program of study supported
	by the Australian Research Council Centre of Excellence ACEMS.

	\appendix
\section*{Appendix}\label{A1}
\renewcommand{\thesection}{A} 
\setcounter{equation}{0}
The statistical mechanical model on a cylinder, as referred to in Section \ref{S2.1}, was shown to be
exactly solvable at $\beta = 2$ by Choquard \cite{Ch81}, and further elaborated on in \cite{CFS83}.
Define
\begin{equation}\label{WL}
W_L := {1 \over 4} \sum_{j=1}^N x_j^2 + \sum_{1 \le j < k \le N} \Phi((x_j,y_j),(x_j',y_j')),
\end{equation}
with $\Phi$ given by (\ref{2.2}), so that the Boltzmann factor with $\beta = 2$ is $e^{-2 W_L}$.
For this system, the probability $E^{(2d)}(0,(-\infty,s) \times (0,L))$ say that there are no particles in the
region $x \in (-\infty, s)$, $y \in [0,L)$, is given by
\begin{equation}\label{WL1}
E^{(2d)}(0,(-\infty,s) \times (0,L)) = Q_N(s)/ Q_N(-\infty),
\end{equation}
where
\begin{equation}\label{WL2}
Q_N(s) := \int_s^{\infty} dx_1 \cdots  \int_s^{\infty} dx_N \int_0^L dy_1 \cdots \int_0^L dy_N \, e^{-2 W_L}.
\end{equation}

The integration technique of  \cite{Ch81}, \cite{CFS83}, which uses the fact that with $z_l := e^{2 \pi i (y_l + i x_l) / L}$,
$$
 e^{-2 W_L} \propto \prod_{l=1}^N e^{- \sum_{j=1}^N (x_j^2 - 4\pi  (N - 1)  x_j / L)/2}
 \prod_{1 \le j < k \le N} (z_k - z_j)( \bar{z}_k - \bar{z}_j),
 $$
 then replaces each product with a Vandermonde determinant according to (\ref{1.8}), allows $Q_N(s)$ to be
 computed explicitly as a product of one-dimensional integrals. Substituting in (\ref{WL1}) shows
 \begin{equation}\label{WL3}
 E^{(2d)}(0,(-\infty,s) \times (0,L)) = \prod_{l=1}^N \sqrt{2 \over \pi} \int_s^\infty e^{- (x - 2 \pi (N - 2l + 1)/L)^2/2} \, dx.
 \end{equation}
 In the notation of (\ref{2.1}), (\ref{WL}) corresponds to a harmonic potential of strength $c=1/2$, so
 from the discussion of \S \ref{S4.2} the left edge occurs at $s^*:= -2 \pi N / L$. It follows from (\ref{WL3})
 that
 \begin{equation}\label{WL4}
  E^{(2d)}_{\rm edge}(0,(s,\infty)) := \lim_{N \to \infty}  E^{(2d)}(0,(s^*+s,\infty) \times (0,L)) = \prod_{l=0}^\infty  \sqrt{2 \over \pi} \int_s^\infty e^{- (x - 2 \pi ( 2l + 1)/L)^2/2} \, dx.
  \end{equation}
  For $s$ large in this expression, see that the term $l$ in the product contributes of order $e^{- (s - 2 \pi ( 2l + 1)/L)^2/2}$ for $l$ up to the value
  $s L / 2 \pi$ (appropriately rounded), and unity after this. Hence to leading order
   \begin{equation}\label{WL5}
  E^{(2d)}_{\rm edge}(0,(s,\infty)) \mathop{\sim}\limits_{s \to \infty} \prod_{l=0}^{[s L / 2 \pi]} e^{- (s - 2 \pi ( 2l + 1)/L)^2/2} \sim e^{- s^3 L / 24 \pi},
    \end{equation}
    where the final asymptotic expression follows by summing the exponents in the expression before, observing it can be
    written as a Riemann sum. This gives agreement with (\ref{m5}).

\end{document}